\documentclass[aps, pra, reprint, superscriptaddress]{revtex4-2}
\usepackage[utf8]{inputenc}

\usepackage{amsmath, amssymb, amsthm, bm, physics}
\usepackage{xcolor, graphicx, subfigure, float, hyperref}
\usepackage{tcolorbox}
\usepackage{dsfont}
\usepackage{booktabs}
\usepackage{microtype}
\newtheorem{thm}{Theorem}[]
\newtheorem{lem}{Lemma}
\newtheorem{prop}{Proposition}

\usepackage{hyperref}
\definecolor{forestgreen}{rgb}{0.13, 0.55, 0.13}

\begin{document}
\title{Semi-Device Independent Non-stabilizerness Certification in the Prepare-and-Measure Scenario}

\author{Santiago Zamora}
\affiliation{Physics Department, Federal University of Rio Grande do Norte, Natal, 59072-970, Rio Grande do Norte, Brazil}
\affiliation{International Institute of Physics, Federal University of Rio Grande do Norte, 59078-970, Natal, Brazil}
\author{Rafael A. Macêdo}
\affiliation{Physics Department, Federal University of Rio Grande do Norte, Natal, 59072-970, Rio Grande do Norte, Brazil}
\affiliation{International Institute of Physics, Federal University of Rio Grande do Norte, 59078-970, Natal, Brazil}
\author{Tailan S. Sarubi}
\affiliation{Physics Department, Federal University of Rio Grande do Norte, Natal, 59072-970, Rio Grande do Norte, Brazil}
\affiliation{International Institute of Physics, Federal University of Rio Grande do Norte, 59078-970, Natal, Brazil}
\author{Moisés Alves}
\affiliation{Physics Department, Federal University of Rio Grande do Norte, Natal, 59072-970, Rio Grande do Norte, Brazil}
\affiliation{International Institute of Physics, Federal University of Rio Grande do Norte, 59078-970, Natal, Brazil}
\author{Davide Poderini}
\affiliation{Universit\`a degli Studi di Pavia, Dipartimento di Fisica, QUIT Group, via Bassi 6, 27100 Pavia, Italy}
\affiliation{International Institute of Physics, Federal University of Rio Grande do Norte, 59078-970, Natal, Brazil}
\author{Rafael Chaves}
\affiliation{International Institute of Physics, Federal University of Rio Grande do Norte, 59078-970, Natal, Brazil}
\affiliation{School of Science and Technology, Federal University of Rio Grande do Norte, Natal, Brazil}

\begin{abstract}
Non-stabilizerness is an essential resource for quantum computational advantage, as stabilizer states admit efficient classical simulation. We develop a semi-device-independent framework for certifying non-stabilizer states in prepare-and-measure (PAM) scenarios, relying only on assumptions about the system's dimension. Within this framework, we introduce prepare-and-measure witnesses that can distinguish stabilizer from non-stabilizer states, and we provide analytical proofs that threshold violations of these witnesses certify non-stabilizerness. In the simplest setting—three preparations, two measurements, and qubit systems—surpassing a specific threshold guarantees that at least one prepared state lies outside the stabilizer polytope, while a stronger violation can certify at least two. We extend this approach by linking it to quantum random access codes, also generalizing our results to qutrit systems and introducing a necessary condition for certifying non-stabilizerness based on state overlaps (Gram matrices). These results offer a set of semi-device-independent tools for practically and systematically verifying non-stabilizer states using prepare-and-measure inequalities.
\end{abstract}

\maketitle

\section{Introduction}

Quantum computation offers exponential speed-ups over classical algorithms, yet not all quantum resources are equally powerful. The Gottesman-Knill theorem \cite{gottesman1998heisenbergrepresentationquantumcomputers} illustrates this by showing that stabilizer states, although entangled, are classically simulable \cite{PhysRevA.70.052328}. Consequently, non-stabilizer states—those lying outside the stabilizer formalism—are essential for achieving universal fault-tolerant quantum computation \cite{bravyi2005universal, howard2017application}. This resource, known as non-stabilizerness \cite{bravyi2005universal}, underpins quantum advantage \cite{gottesman1997stabilizer}.

Non-stabilizerness arises because Clifford operations and stabilizer states alone form a subtheory of quantum mechanics that admits efficient classical simulation \cite{bravyi2005universal}. Non-stabilizer states, such as the T-state or the qutrit strange~ state \cite{Heinrich2019robustnessofmagic, Veitch_2014}, when used as ancillary resources, elevate this subtheory to full universality via state injection and distillation protocols \cite{Souza2011,10.1007/978-3-642-10698-9_3}. Non-stabilizer operations, which cannot be built from Clifford gates, also possess non-stabilizerness and often present implementation bottlenecks in practice.

Recent advances in resource theories have formalized non-stabilizerness through quantifiers such as robustness of magic \cite{PhysRevLett.118.090501, Heinrich2019robustnessofmagic, Hamaguchi2024handbookquantifying}, mana \cite{Veitch_2014}, trace distance \cite{junior2025geometricanalysisstabilizerpolytope}, and stabilizer entropies \cite{PhysRevLett.128.050402}, each capturing distinct facets of non-stabilizerness. These measures admit operational interpretations and enable rigorous analysis of how non-stabilizerness evolves under quantum processes. For instance, robustness of non-stabilizerness quantifies the minimum mixture with stabilizer states required to represent a given state, while mana links to the negativity of the discrete Wigner function \cite{Veitch_2014, 10.1063/1.2393152, PhysRevA.73.012301}, reflecting its quasi-probabilistic character.

Geometrically, stabilizer states form convex polytopes in the state space whose complexity scales exponentially with system size \cite{Heinrich2019robustnessofmagic}. Though this complexity hampers direct analysis in large systems, studies of few-qudit systems have revealed important connections between geometry, entanglement, and non-stabilizerness \cite{reichardt2009quantumuniversalitystatedistillation, junior2025geometricanalysisstabilizerpolytope, warmuz2024magicmonotonefaithfuldetection}.

Given the central role of non-stabilizerness, efficient certification methods are critical. Inspired by the use of Bell inequalities to detect entanglement, one might consider device-independent (DI) techniques to certify non-stabilizerness. However, such approaches face obstacles due to the dimension-sensitive nature of stabilizer states. For instance, the entangled state described by the superposition $\frac{1}{\sqrt{2}}(\ket{00} + \ket{11})$ is a stabilizer state for a two-qubit ($d=2$) system. Yet, this same mathematical state is a non-stabilizer state when considered within the larger Hilbert space of a two-qutrit ($d=3$) system. This dependence of a state's stabilizer status on the underlying dimension fundamentally complicates any purely device-independent certification method.

Nonetheless, specially tailored Bell inequalities have been shown to detect the presence of non-stabilizerness \cite{macedo2025witnessingmagicbellinequalities}. These methods, however, depend on entanglement and are invariant under local unitaries, rendering them ineffective for certifying non-stabilizerness in single systems. This motivates semi-device-independent (SDI) approaches, which impose operational constraints without full device characterization. In particular, the prepare-and-measure (PAM) scenario \cite{PhysRevLett.105.230501} emerges as a natural SDI framework for certifying non-stabilizerness without the need for multipartite systems.

The PAM scenario plays a versatile role in quantum information \cite{tavakoli2025unlimitedquantumcorrelationadvantage, svegborn2025quantuminputsprepareandmeasurescenario, de2021general}. Here, a preparation device encodes data into classical or quantum states of fixed dimension, which are then sent to a measurement device. PAM implementations rely solely on dimensional assumptions, although alternative constraints such as entropy \cite{PhysRevLett.115.110501}, distinguishability \cite{brask2017megahertz}, or photon number \cite{pauwels2024prepare} may be considered. PAM scenarios have found applications in quantum communication, QRACs \cite{PhysRevA.85.052308, ambainis2009quantumrandomaccesscodes}, randomness generation \cite{PhysRevA.100.062338, pawlowski2011semi}, entanglement certification \cite{moreno2021semi, tavakoli2021correlations, pauwels2022entanglement}, self-testing \cite{tavakoli2020self}, and foundational tests \cite{pawlowski2009information,chaves2015information,PhysRevLett.134.120203, Divianszky2023, chaves2018causal, polino2019device}. Their flexibility allows robust certification of quantum advantages using prepare-and-measure inequalities (PI) \cite{PhysRevLett.105.230501}.

Our work builds upon recent progress in non-stabilizerness certification. The concept of ``set-magic" in an ensemble of states was introduced in Ref.~\cite{wagner2024certifyingnonstabilizernessquantumprocessors}, which proposed a certification method based on linear overlap inequalities. Concurrently, a detailed geometric (device-dependent) analysis of the stabilizer polytope, along with a  useful resource quantifier, was developed in Ref.~\cite{junior2025geometricanalysisstabilizerpolytope}. The present work bridges these ideas: we introduce a novel, semi-device-independent framework to certify the set-magic of Ref.~\cite{wagner2024certifyingnonstabilizernessquantumprocessors}, and we use the geometric quantifier from Ref.~\cite{junior2025geometricanalysisstabilizerpolytope} to analyze our results. Our primary contribution is thus an operational, semi-device-independent and experimentally friendly certification method within the prepare-and-measure scenario.

In this work, we develop witnesses based on prepare-and-measure inequalities that, in other contexts, have been used as dimension witnesses~\cite{PhysRevLett.105.230501}, and on average success probabilities of QRAC's~\cite{ambainis2009quantumrandomaccesscodes,Farkas2019}.     Here, we repurpose them for the distinct task of certifying the non-stabilizerness resource within a fixed-dimension scenario. By constraining the dimension of transmitted messages, we derive criteria distinguishing stabilizer from non-stabilizer preparations. Extending the approach of \cite{HORODECKI1995340}, we generalize the quantum violation criterion introduced in \cite{PhysRevResearch.2.043106} to a broader class of PI.

This paper is organized as follows. Section \ref{Sec: stabilizer_formalism} reviews the stabilizer formalism, and Section \ref{Sec: PAM_scenario} formalizes the prepare-and-measure scenario. In Section \ref{sec:PAM_simplest_scenario}, we analyze the minimal PAM setting, showing  explicit witnesses for non-stabilizerness and demonstrating the self-testing of H-type magic states. We then extend this analysis in Section \ref{sEC:QUANTUM RANDOM ACCESS CODES} by linking our framework to quantum random access codes (QRACs), which allow for a finer-grained certification and a generalization to qudit systems. In Section \ref{sec:Separating Witnesses}, we introduce a necessary geometric condition for certification based on state overlaps (Gram matrices). We conclude in Section \ref{sec:FINAL REMARKS} with final remarks and perspectives.

\section{Stabilizer formalism for qudits}\label{Sec: stabilizer_formalism}
In this work, we focus on the detection of non-stabilizer states in qudit systems. To this end, we begin by introducing and characterizing the set of all stabilizer states, which form a convex structure known as the stabilizer polytope.

Let us begin by defining a pair of operators that act on qudits states described in a Hilbert space $\mathcal{H} \cong \mathbb C^d$. In particular, given the computational basis $\{\ket{j}\}_{j \in \mathbb{F}_d}$ with $\mathbb{F}_d = \{0, 1, \dots, d-1\}$, where $d$ is a prime number, one can define the Shift and Boost operators \cite{Veitch_2014,macedo2025witnessingmagicbellinequalities}:
\begin{equation}
    X\ket{j} = \ket{j+1 \, \text{mod } d}, \hspace{1cm} Z\ket{j} = \omega^j \ket{j},\label{Boost_and_shift}
\end{equation}
where $\omega = \exp(2\pi i/d)$. These operators satisfy $X^d=Z^d=I$ and $Z^{a}X^b = \omega^{ab} X^bZ^{a}$ and they generate the qudit Pauli group defined as  
\begin{equation}
    \mathcal{P}_{d} = \{D_{a, b} = \omega_d^{ab} X^{a} Z^{b} \mid (a,b) \in \mathbb{F}_{d} \times \mathbb{F}_{d} \}, \label{eq: Weyl_Heisenber_group}
\end{equation}
where \( \omega_d = \omega^{2^{-1}} \) if \( d > 2 \), and \( \omega_d = i \) if \( d = 2 \). All operations are taken modulo \( d \); in particular, \( 2^{-1} \) denotes the multiplicative inverse of \( 2 \) in \( \mathbb{F}_d \).  
The elements $D_{a ,b}$ are called displacement operators and the group multiplication is defined by
\begin{equation}
    D_{a,b}\cdot D_{a'b'} = \omega_d^{a'b-ab'}D_{a+a',b+b'} \label{eq: WHgroup_operation}.
\end{equation}

The Clifford group \( \mathcal{C}\ell_d \) is defined as the normalizer of \( \mathcal{P}_d \) within the unitary group \( \mathcal{U}(d) \). That is, it consists of all unitary operators that preserve \( \mathcal{P}_d \) under conjugation:
\begin{equation}
    \mathcal{C}\ell_d = \{ C \in \mathcal{U}(d) \mid C P C^\dagger \in \mathcal{P}_d \;,\;\forall P \in \mathcal{P}_d \}.
\end{equation}
In other words, Clifford unitaries map Pauli operators to other Paulis, preserving the group structure. Given a Clifford unitary $C \in \mathcal C\ell_d$, there is a state $C|0\rangle$ which must be \emph{stabilized} by the Pauli $P = CZC^\dagger$, that is, $P \;C|0\rangle = + C|0\rangle$. We call such a state a stabilizer state. The group generated by $P$, denoted as $\mathcal S = \langle P \rangle$, is called the stabilizer group. We will use $|\mathcal S\rangle $ to refer to the stabilizer state stabilized by $\mathcal S$.

In dimension $d$, there are $d(d+1)$ stabilizer states \cite{Veitch_2014}, whose convex hull defines the stabilizer polytope:
\begin{equation}
    \mathrm{STAB}_{d} = \mathrm{conv}\{ |\mathcal S\rangle \langle \mathcal S| \mid \mathcal S \mathrm{\;is\;a\;stabilizer\;group} \} \subseteq \mathcal D(\mathbb C^{d})\;, \label{eq: StabPolytope_Vrep}
\end{equation}
where $\mathcal D(\mathbb C^d)$ refers to the space of density matrices on $\mathbb C^d$. That is, if $\rho \in \mathrm{STAB}_d$, $\rho = \sum_{\mathcal S} p_\mathcal S |\mathcal S\rangle \langle \mathcal S|$, with $p_\mathcal S\geq 0$ and $\sum_\mathcal Sp_\mathcal S=1$. This is the so-called $V$-representation of a polytope \cite{grunbaum1967convex}, where the vertices are specified. Alternatively, the stabilizer polytope can be characterized via its \( H \)-representation \cite{junior2025geometricanalysisstabilizerpolytope}:
\begin{equation}
    \mathrm{STAB}_d = \left\{ \rho \in \mathcal{D}(\mathbb C^d) \,\middle|\, \mathrm{Tr}(\rho A^{\vec {q}}) \geq 0,\; \vec{q} \in \mathbb{F}_d^{d+1} \right\}, \label{eq: H-rep}
\end{equation}
where the operators \( A^{\vec{q}} = -I_d + \sum_{j=1}^{d+1} \Pi_j^{q_j} \) define the facets of the polytope \cite{Howard2014}. Here, \( \Pi_j^{q_j} \) denotes the rank-$1$ projector on the eigenvector corresponding to the eigenvalue \( \omega^{q_j} \) of the \( j \)-th operator in the set \( \{D_{0,1}, D_{1,0}, D_{1,1},\ldots, D_{1,d-1}\} \subset \mathcal{P}_d \).

The stabilizer polytope partitions the state space into two disjoint subsets: the polytope itself, which contains all stabilizer states, and its complement, non-stabilizer states $\rho \notin \mathrm{STAB}_d$, also referred to as \emph{magic} states. Based on their importance, a quantum resource theory \cite{chitambar2019quantum} of non-stabilizerness can also be formulated \cite{Veitch_2014}, where the sets of free states are $\mathrm{STAB}_d$, and free operations are Clifford unitaries and Pauli measurements~\footnote{This assumes a fixed dimension $d$, corresponding to what is known as the \emph{stabilizer subtheory}. If the dimension is not fixed, then one can also consider tensor products with stabilizer ancillae qudits and partial traces as free operations}.


On the other side, in quantum networks, cryptographic protocols, or prepare-and-measure scenarios, one often deals with collections of independently prepared states. The operational power in such settings often derives from how distinguishable, incompatible, or non-orthogonal these states are with respect to a free set. However, in some cases, the certification of non-classicality in these scenarios can also depend on a relational aspect between the 
elements of the set of independent states. As we will show in this paper, this is indeed the case when the task is to certify non-stabilizerness in the PAM scenario in a SDI independent way. We will consider the so-called \textit{set-magic} first defined in Ref.~\cite{wagner2024certifyingnonstabilizernessquantumprocessors}. 
It characterizes, from a relational point of view, the non-stabilizerness of a finite set of states $\underline \rho = \{\rho_i\}_i \subset \mathcal{D}(\mathbb C^d)$. Its definition is as follows: consider an ensemble $\underline \rho \subset \mathcal{D}(\mathbb C^d)$ consisting of a finite number of elements, each of dimension $d$. We say that $\underline \rho$ has set-magic if there is no unitary $U$ such that $U \underline \rho U^\dagger \subseteq \mathrm{STAB}_d$. In other words, at least one element in the set must be outside the stabilizer set.

There are various proposals for the quantum verification of stabilizerness, and in particular, there are efficient protocols for non-stabilizerness quantification \cite{PRXQuantum.4.010301,PhysRevLett.132.240602,haug2025efficientwitnessingtestingmagic}. However, these approaches are resource-intensive and rely on several assumptions and requirements.  In the next section, we introduce and discuss PAM scenarios, which are semi-device-independent (SDI) protocols in the sense that they require only a single assumption: the quantum dimension $d$ of the state. This minimal assumption makes PAM witnesses natural candidates for detecting set-magic.

\section{PAM scenario and a hierarchy of three sets of correlations.} \label{Sec: PAM_scenario}

The PAM scenario involves two main devices, as seen in Fig. \ref{fig: PAM_scenario}: a preparation device that generates physical systems based on external inputs $x \in \mathbb X =  \{1,\cdots, |\mathbb X|\}$ described by a random variable $X$, and a measurement device that performs measurements conditional on another external random variable  $Y$ with possible values $y \in \mathbb Y =  \{1,\cdots, |\mathbb Y|\}$. The measurement outputs $b \in \mathbb B=  \{1,\cdots, |\mathbb B|\}$ are represented by a third random variable $B$. This framework allows for a rigorous analytical and practical approach for examining fundamental distinctions between classical and quantum physical theories, particularly concerning dimensional and causal constraints imposed on the systems under study \cite{PhysRevLett.105.230501}. 

In the quantum description of the PAM scenario, given some fixed dimension $d$, the random variable $X$ chooses $\rho_{x \in \mathbb X} \in \mathcal D(\mathbb C^d)$, while $y \in \mathbb Y$ fixes the measurement $M_b^{y}$ to be performed with possible outcome $b \in \mathbb B$ . By applying Born's rule, the observed measurement probability is given by
\begin{equation}
    p(b|x,y) = \text{Tr}(\rho_x M_b^y),\label{eq: born_rule}
\end{equation} 
where $\sum_{b \in \mathbb B}M_{b}^y =1$ and $M_{b}^y\succeq0  \quad\forall y,b$. 

If the distribution admits a classical explanation,
\textit{i.e.} if $p(b|x,y)$ is such that it can be decomposed as
\begin{equation}
    p(b|x,y) = \sum_{\lambda, m} p(b|m,y,\lambda) p(m|x,\lambda) p(\lambda),
    \label{eq:classicalprob}
\end{equation}
where $\lambda$ is a latent variable that allows for correlations between the preparation and measurement devices, then the quantum message can be replaced instead by a classical random variable $M$ taking values $m\in \mathbb{M} =\{1,\dots,|\mathbb{M}|\} $ with  $|\mathbb{M}|=d$.  In Fig. \ref{fig: PAM_scenario} we show the associated causal structure represented by a directed acyclic graph (DAG). 

 Given the tuple $(d, \mathbb X,\mathbb Y,\mathbb B)$, we say that a behaviour $\{p(b|x,y)\}_{b \in \mathbb B, x \in \mathbb X, y \in \mathbb Y}$ is in $\mathcal C_{C,d} \subseteq [0,1]^{|\mathbb B||\mathbb X||\mathbb Y|}$ if it admits the classical explanation written in  Eq.~(\ref{eq:classicalprob}), or in $\mathcal C_{Q,d}\subseteq [0,1]^{|\mathbb B||\mathbb X||\mathbb Y|}$ if it admits a quantum state-measurement decomposition as in Eq.~(\ref{eq: born_rule}). The classical set is known to be a polytope \cite{PhysRevLett.105.230501}, where the extrema are given by deterministic strategies. 

\begin{figure}[t!]
    \centering
    \includegraphics[width=1.0\linewidth]{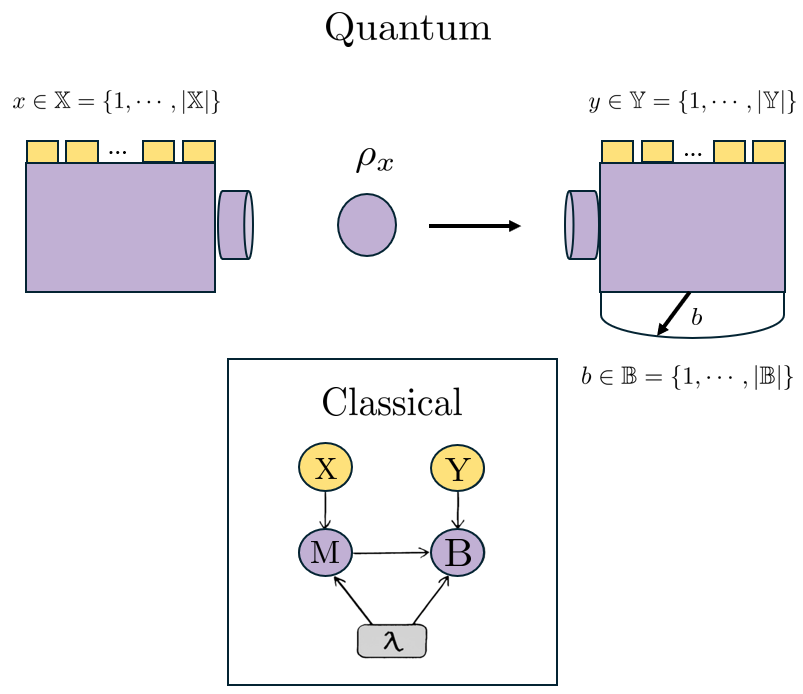}
    \caption{Illustration of the PAM scenario. A  preparation device selects an input $X = x \in \{1, \cdots, |\mathbb X|\}$, producing a quantum state represented by the density matrix \( \rho_x \). This state is subsequently measured by a measurement device configured with input $Y = y \in \{1, \cdots, |\mathbb Y|\}$, yielding outcome $B =b\in\{1, \cdots, |\mathbb B|\}$. Inside the square, we show the associated causal structure when considering a classical hidden variable $\lambda$ between the preparation and measurement device. }
    \label{fig: PAM_scenario}
\end{figure}
 
Now we introduce a third set $\mathcal C_{\mathrm{STAB},d} \subseteq [0,1]^{|\mathbb B||\mathbb X||\mathbb Y|}$ composed of  correlations generated only by stabilizer states:
\begin{equation}
    p(b|x,y) = \mathrm{Tr}(\sigma_x M^y_b), \quad \quad \sigma_x \in \mathrm{STAB}_d\;.
\end{equation}
 We now show some general relations between the 3 sets. 
\begin{prop}
    Let $\underline \rho = \{\rho_x\}_x$ be a set of states and $\underline M = \{M^y_b\}_{b,y}$ be a set of measurements. The map  $(\underline \rho, \underline M) \mapsto \{p(b|x,y) = \mathrm{Tr}(\rho_x M^{y}_b)\}_{b \in B, x \in X, y \in Y} \in \mathcal C_{Q,d}$ has the same image for $(\underline \rho, \underline M)$ and $(U\underline \rho U^\dagger, U \underline M U^\dagger)$ for any $U \in \mathcal U(d)$.
    \label{prop:unitary_symmetry}
\end{prop}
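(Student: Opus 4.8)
The plan is to verify the two ingredients that together establish the claim: first, that conjugating both the states and the measurements by a common unitary $U$ produces again a legitimate preparation ensemble and measurement set, so that the conjugated pair lies in the domain of the map into $\mathcal C_{Q,d}$; and second, that each Born-rule probability is left unchanged by this conjugation. Since the map sends a pair $(\underline\rho,\underline M)$ to the behaviour it generates, establishing that both pairs generate the identical behaviour is precisely the assertion that their images coincide.

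First I would confirm admissibility of the transformed objects. For each $x$, the operator $U\rho_x U^\dagger$ is positive semidefinite with unit trace, hence again an element of $\mathcal D(\mathbb C^d)$. For each $y$, positivity $M^y_b \succeq 0$ is preserved under unitary conjugation, and the completeness relation transforms correctly,
\begin{equation}
    \sum_{b}U M^y_b U^\dagger = U\Bigl(\sum_{b} M^y_b\Bigr)U^\dagger = U I U^\dagger = I,
\end{equation}
so $\{U M^y_b U^\dagger\}_b$ is still a valid POVM for every $y$. Thus $(U\underline\rho U^\dagger, U\underline M U^\dagger)$ indeed defines a point of $\mathcal C_{Q,d}$.

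The core computation is then a direct application of the cyclic invariance of the trace together with $U^\dagger U = I$:
\begin{equation}
    \mathrm{Tr}\!\bigl((U\rho_x U^\dagger)(U M^y_b U^\dagger)\bigr) = \mathrm{Tr}\!\bigl(U \rho_x M^y_b U^\dagger\bigr) = \mathrm{Tr}\!\bigl(\rho_x M^y_b\, U^\dagger U\bigr) = \mathrm{Tr}(\rho_x M^y_b).
\end{equation}
As this holds for every triple $(b,x,y)$, the two behaviours are equal entry by entry.

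I do not expect a genuine obstacle here, since the statement is essentially the unitary invariance of the Born rule when the \emph{same} unitary acts on preparations and measurements. The only point deserving (minimal) care is the admissibility check above: without it one would have merely an equality of numerical expressions, whereas equality of \emph{images} requires that the conjugated pair actually lie in the domain of the correlation map. Once that is noted, the cyclicity argument closes the proof immediately.
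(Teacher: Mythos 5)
Your proof is correct and follows essentially the same route as the paper's: the key step in both is the cyclic invariance of the trace, $\mathrm{Tr}(\rho_x M^y_b) = \mathrm{Tr}(U\rho_x U^\dagger\, U M^y_b U^\dagger)$. Your additional check that unitary conjugation preserves positivity, unit trace, and the POVM completeness relation is a sensible bit of extra rigor that the paper leaves implicit, but it does not change the substance of the argument.
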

\begin{proof}
    This follows from the cyclic property of the trace:
\begin{equation}
    p(b|x,y) = \mathrm{Tr}(\rho_x M^{y}_b) = \mathrm{Tr}(U \rho_x U^\dagger U M^y_b U^\dagger)\;.
\end{equation}
\end{proof}
This symmetry results in strong constraints for the correlation sets:
\begin{lem}
 (1) $\mathcal C_{C,d} \subseteq \mathcal C_{\mathrm{STAB},d} \subseteq \mathcal C_{Q,d}$ and (2) if $|\mathbb X| \leq d$, we say that the scenario becomes trivial, meaning that $\mathcal C_{C,d} = \mathcal C_{\mathrm{STAB},d} = \mathcal C_{Q,d}$.
 \label{lem:non-trivial}
\end{lem}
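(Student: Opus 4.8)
The plan is to prove (1) as two separate inclusions and then obtain (2) by establishing the single reverse inclusion $\mathcal C_{Q,d} \subseteq \mathcal C_{C,d}$ under the hypothesis $|\mathbb X| \le d$, so that the chain in (1) collapses to equalities.

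For the first inclusion $\mathcal C_{C,d} \subseteq \mathcal C_{\mathrm{STAB},d}$, I would start from the fact, stated above, that $\mathcal C_{C,d}$ is a polytope whose vertices are the deterministic strategies $m = f(x)$, $b = g(m,y)$. The key observation is that the computational-basis states $\ket{m}\bra{m}$ are eigenstates of $Z = D_{0,1}$ and hence stabilizer states, so each deterministic vertex is reproduced by preparing $\sigma_x = \ket{f(x)}\bra{f(x)} \in \mathrm{STAB}_d$ and measuring with the computational-basis POVM $M^y_b = \sum_{m:\,g(m,y)=b}\ket{m}\bra{m}$, which gives $\mathrm{Tr}(\sigma_x M^y_b) = \delta_{b,\,g(f(x),y)}$. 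A general classical behaviour is a convex mixture of such vertices weighted by the latent variable $\lambda$; reproducing it amounts to letting the preparation and measurement devices share $\lambda$ and, for each value, preparing the diagonal (hence stabilizer) state $\sigma_{x,\lambda} = \sum_m p(m|x,\lambda)\ket{m}\bra{m}$ and measuring with $M^{y,\lambda}_b = \sum_m p(b|m,y,\lambda)\ket{m}\bra{m}$, so that averaging $\mathrm{Tr}(\sigma_{x,\lambda}M^{y,\lambda}_b)$ over $\lambda$ returns $p(b|x,y)$. The second inclusion $\mathcal C_{\mathrm{STAB},d} \subseteq \mathcal C_{Q,d}$ is immediate, since $\mathrm{STAB}_d \subseteq \mathcal D(\mathbb C^d)$ means every stabilizer behaviour is in particular a quantum behaviour.

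For part (2), I would show that when $|\mathbb X| \le d$ every quantum behaviour is already classical. Since the message alphabet has cardinality $|\mathbb M| = d \ge |\mathbb X|$, the input can be encoded losslessly by the deterministic map $m = x$, i.e. an injection $\mathbb X \hookrightarrow \mathbb M$. The classical strategy with $p(m|x) = \delta_{m,x}$ and response $p(b|m,y) = \mathrm{Tr}(\rho_m M^y_b)$ — a valid distribution over $b$ because $M^y_b \succeq 0$ and $\sum_b M^y_b = I$ — then reproduces $p(b|x,y) = \mathrm{Tr}(\rho_x M^y_b)$ exactly. This yields $\mathcal C_{Q,d} \subseteq \mathcal C_{C,d}$, and combining with (1) forces $\mathcal C_{C,d} = \mathcal C_{\mathrm{STAB},d} = \mathcal C_{Q,d}$.

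The step requiring the most care is the first inclusion: the classical set is defined with the shared latent variable $\lambda$, so for the mixture argument to remain inside $\mathcal C_{\mathrm{STAB},d}$ — and for the whole chain to be consistent — one must treat shared classical randomness as a free resource present in all three sets, equivalently, take each set to be convex. I would make this convention explicit, noting that without it a single fixed measurement cannot track a $\lambda$-dependent decoding and even $\mathcal C_{C,d} \subseteq \mathcal C_{Q,d}$ could fail; once shared randomness is granted, convexity is automatic and the vertex-by-vertex construction suffices. The second inclusion and part (2) are then routine.
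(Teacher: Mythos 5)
Your proof is correct, and at the level of the constructions it is close to the paper's: the paper also realizes classical behaviours with states diagonal in ($\lambda$-dependent) orthonormal bases, and also proves part (2) by encoding $x$ into distinguishable basis states. The organizational differences are these. For (1), the paper first argues $\mathcal C_{C,d}\subseteq\mathcal C_{Q,d}$ using those diagonal states, and then transfers the conclusion to the stabilizer set by rotating the diagonalizing basis onto the computational basis and invoking the unitary symmetry of Proposition \ref{prop:unitary_symmetry}; you instead prove $\mathcal C_{C,d}\subseteq\mathcal C_{\mathrm{STAB},d}$ directly, vertex by vertex plus mixing. For (2), the paper shows that with $\rho_x=\ket{x}\bra{x}$ and the diagonal POVMs $Q^y_b$ \emph{every} behaviour is simultaneously classical and quantum (so all three sets equal the full behaviour space), while you phrase the same encoding idea as the single reverse inclusion $\mathcal C_{Q,d}\subseteq\mathcal C_{C,d}$; these are equivalent, and yours is slightly more economical.

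The point where you genuinely depart from the paper—and where you are more careful than it—is the shared-randomness issue you flag at the end. As defined in the paper, $\mathcal C_{C,d}$ contains the latent variable $\lambda$ while $\mathcal C_{\mathrm{STAB},d}$ and $\mathcal C_{Q,d}$ do not, and the paper's proof of $\mathcal C_{C,d}\subseteq\mathcal C_{Q,d}$ tries to get by with a single $\lambda$-independent POVM: it verifies that its constructed states yield classical-form statistics for \emph{any} measurement, i.e., it defines $p(b|y,m,\lambda)=\bra{m_\lambda}M^y_b\ket{m_\lambda}$ from a given $\underline M$, but it never shows that a measurement exists reproducing the \emph{prescribed} response functions $p(b|m,y,\lambda)$ of the target decomposition—which is what the claimed inclusion requires. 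Your warning that the inclusion ``could fail'' without convexification is not hypothetical. The classical polytope is full-dimensional in behaviour space, of dimension $|\mathbb X||\mathbb Y|(|\mathbb B|-1)$ (take the deterministic strategy with constant output together with those outputting $b=1$ iff $(x,y)=(x_0,y_0)$), whereas the $\lambda$-free quantum set is the image of a polynomial map from a parameter space of dimension $|\mathbb X|(d^2-1)+|\mathbb Y|(|\mathbb B|-1)d^2$; for $d=2$, $|\mathbb B|=2$, $|\mathbb X|=|\mathbb Y|=8$ this gives $56<64$, so that set has empty interior and cannot contain the classical polytope. Hence the lemma is true only under your convention that shared randomness is free in all three sets (equivalently, that all three sets are convexified—a convention that leaves every witness bound $W^S$ unchanged, so the paper's later results are unaffected), and your proof, unlike the paper's, is complete once that convention is stated.
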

\begin{proof}
    (1) The inclusion $ \mathcal C_{\mathrm{STAB},d} \subseteq \mathcal C_{Q,d}$ follows from the fact that stabilizer states are a subset of all quantum states, $\mathrm{STAB}_d\subseteq\mathcal{D(\mathcal{H}_d})$. The inclusion $\mathcal C_{C,d} \subseteq \mathcal C_{\mathrm{STAB},d}$ can be seen by noting that any classical probability distribution can be reproduced by using quantum states that are diagonal in a fixed basis; these states are a subset of the stabilizer polytope (up to a unitary rotation). 
    (2) The equality of the sets is a foundational result in the study of dimension witnesses\cite{PhysRevLett.105.230501, PhysRevResearch.2.043106}. The core idea is that when the number of preparations does not exceed the system's dimension, each input $x$ can be encoded into a perfectly distinguishable  quantum state (consider an orthogonal state basis for example). Since any arbitrary measurement statistics can be recovered by performing a suitable measurement on this set of distinguishable states, no quantum advantage is possible, and the correlation sets coincide.
\end{proof}
The usual way to characterize the observed PAM correlations is through prepare-and-measure inequalities, which have been proposed before as \emph{dimensional witnesses}~\cite{PhysRevLett.105.230501}. Given a scenario $S \in \{C, \mathrm{STAB},Q \}$ ($S$ can represent the classical, stabilizer or quantum scenario), the set of numbers $\{W^b_{x,y}\}_{b \in \mathbb B, x \in \mathbb X, y \in \mathbb Y} \subseteq \mathbb R$  defines the prepare-and-measure witness 
\begin{equation}
    W^S \geq W= \sum_{b,x,y} W^b_{x,y}p(b|x,y)
\end{equation}
where 
\begin{align}
    W^S &\equiv \max_{p \in \mathcal C_{S,d}} \sum_{b,x,y} W^b_{x,y}p(b|x,y) \nonumber\\ &=\max_{\substack{  \{\rho_x\} \subseteq S \\\{M^y\} \subseteq \mathcal M_\mathbb B(\mathbb C^d)}    } \sum_{b,x,y}W^{b}_{x,y} \mathrm{Tr}(\rho_x M^b_y)\;,
    \label{eq:state-measurement_opt}
\end{align}
denotes the $S$ value  of $W$.  $\mathcal M_\mathbb B(\mathbb C^d)$ denotes the space of $|\mathbb B|$-outcome measurements on $\mathbb C^d$. The hierarchy on Lemma \ref{lem:non-trivial} guarantees that $W^C \leq W^{\mathrm{STAB}} \leq W^Q$.

Now, given some set $\{\rho_x\}_{x \in X} \subseteq \mathcal D(\mathbb C^d)$ and a witness $W$,   considering all possible measurements defines the maximum value achievable by $W$ on this set of states as
\begin{equation}
    W(\{\rho_x\}_x) \equiv \max_{M^y \in \mathcal M_\mathbb B(\mathbb C^d)} \sum_{b,x,y} W^b_{x,y}\mathrm{Tr}(\rho_x M^y_b)\; .
    \label{eq:state_witness}
\end{equation}
Consider the violation $W(\{\rho_x\}_x)> W^S$. It   has the following interpretation.
\begin{lem}\label{lemma: violation_unitary}
    Let $W(\{\rho_x\}_x) > W^S$. Then, $\nexists U \in \mathcal U(d)$ such that $\{U \rho_x U^\dagger \}_x \subset S$.
\end{lem}
\begin{proof}
    Note that $W(\{U \rho_x U^\dagger\}_x) = W(\{\rho_x\}_x)$, since the optimization over measurements is invariant under unitaries. Then, if such a unitary exists, we have $W(\{\rho_x\}) = W(\{U \rho_x U^\dagger\}_x \subset S) \leq W^S$.
\end{proof}

Hence, a violation certifies the non-inclusion of the set $\{\rho_x\}_{x \in X}$ in $S$. The scenario $S=C$ arises when one  considers \textit{set-incoherent} ensembles of states \cite{PhysRevLett.126.220404,wagner2024certifyingnonstabilizernessquantumprocessors,PhysRevA.101.062110}. \textit{Set-coherence} was first introduced in \cite{PhysRevLett.126.220404}, and is a basis-independent definition of coherence which measures if a set of states can be written diagonally on the same basis. In the case of $S=\mathrm{STAB}$, it falls exactly into the set-magic definition of the last section.

In the next sections, we will provide explicit witnesses that show that in various PAM scenarios, the strict inclusion hierarchy holds,
\begin{equation}
    \mathcal C_{C,d} \subset \mathcal C_{\mathrm{STAB},d} \subset \mathcal C_{Q,d}\;.
    \label{eq:strict_hierarchy}
\end{equation}

\section{Simplest PAM scenario}\label{sec:PAM_simplest_scenario}
\subsection{The standard witness $S_3$}\label{section: A simple analytical witness}
In this section, we will show that the simplest PAM scenario already separates the classical-stabilizer-quantum correlations. The scenario is defined by the tuple $(d=2, |\mathbb X| = 3, |\mathbb Y| = 2, |\mathbb B|=2 )$, which is in fact the first non-trivial scenario allowed by Lemma \ref{lem:non-trivial}. Given $x \in \mathbb X$ and $y \in \mathbb Y$, define $E_{xy} =p(0|x,y)-p(1|x,y)$. The corresponding classical polytope $\mathcal C_{C,d}$ is fully described by the tight prepare-and-measure inequality \cite{PhysRevLett.105.230501}
\begin{equation}
    S_3 = E_{11} + E_{12} + E_{21} - E_{22}- E_{31} \leq S^C_3 =3, \label{eq: S3}
\end{equation}
and its $24$ symmetries corresponding to the relabeling of the preparation inputs and measurements inputs and outputs. It is to be compared to the maximal qubit value of $S^Q_3 = 1 +
2\sqrt{2} \approx 3.82$, which can be obtained from the following optimal preparations and measurements \cite{PhysRevLett.105.230501}
\begin{equation}
\rho_x = \frac{\mathbb{I} + \vec{r}_x \cdot \vec{\sigma}}{2}, \quad M_b^{y} = \frac{\mathbb{I} + (-1)^b \vec{s}_{y} \cdot \vec{\sigma}}{2} ,\label{eq: states_measurements}
\end{equation}
with 
\begin{equation}
\vec{s}_y = \frac{\vec{r}_1 + (-1)^{y+1}\vec{r}_2}{\sqrt{2}}, \quad
\vec{r}_3 = \frac{-\vec{r}_1 - \vec{r}_2}{\sqrt{2}}, \label{eq: optimalMS_S3}
\end{equation}
and where \(\vec{\sigma} = (X, Y, Z)\) denotes the vector of Pauli matrices. Remarkably, it turns out that $S_3$ also witnesses non-stabilizerness.
\begin{lem}
    \begin{equation}
        S^C_3 < S_3^\mathrm{STAB} = \sqrt 5 + \sqrt 2 < S^Q_3\;.
    \end{equation}
\end{lem}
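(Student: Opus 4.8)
The plan is to compute the stabilizer bound $S_3^{\mathrm{STAB}}$ by solving the optimization in Eq.~(\ref{eq:state-measurement_opt}) restricted to $\mathrm{STAB}_2$, and then separately verify both strict inequalities. For the qubit case, the stabilizer polytope is the octahedron inscribed in the Bloch sphere, whose six vertices are the eigenstates of the Pauli operators $X, Y, Z$, i.e.\ the Bloch vectors $\pm\hat x, \pm\hat y, \pm\hat z$. So I would parametrize stabilizer states by Bloch vectors $\vec r_x$ lying in this octahedron and measurements by Bloch vectors $\vec s_y$ on the sphere (the projective two-outcome qubit measurements). Writing $E_{xy} = \vec r_x \cdot \vec s_y$, the witness becomes a bilinear expression $S_3 = \vec s_1\cdot(\vec r_1 + \vec r_2 - \vec r_3) + \vec s_2 \cdot(\vec r_1 - \vec r_2)$.

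First I would maximize over the measurement directions, which is immediate: for fixed preparations, each $\vec s_y$ should align with the vector contracting it, giving $S_3 = \lVert \vec r_1 + \vec r_2 - \vec r_3\rVert + \lVert \vec r_1 - \vec r_2\rVert$. The remaining task is to maximize this over $\vec r_1, \vec r_2, \vec r_3$ constrained to the octahedron $\mathrm{STAB}_2$. Since the objective is convex in each $\vec r_x$, the maximum is attained at vertices of the octahedron, so it suffices to search over the finite set of Pauli-eigenstate Bloch vectors (the six vertices, $\pm\hat x,\pm\hat y,\pm\hat z$), possibly reducing the cases using the $24$ relabeling symmetries of $S_3$. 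I expect the optimum to be realized by a non-coplanar choice such as $\vec r_1 = \hat z$, $\vec r_2 = \hat x$, $\vec r_3 = -\hat y$ (or a symmetry-equivalent triple), which yields $\lVert \hat z + \hat x + \hat y\rVert + \lVert \hat z - \hat x\rVert = \sqrt 3 + \sqrt 2$; I would check this and the competing triples to confirm the claimed value $\sqrt 5 + \sqrt 2$ actually arises, e.g.\ from $\vec r_1 = \hat z, \vec r_2 = \hat x, \vec r_3 = -\hat x$ giving $\lVert \hat z + 2\hat x\rVert + \lVert \hat z - \hat x\rVert = \sqrt 5 + \sqrt 2$. With the maximizing vertices identified, the value $\sqrt 5 + \sqrt 2 \approx 3.65$ follows by direct evaluation, and the two strict inequalities $3 < \sqrt 5 + \sqrt 2 < 1 + 2\sqrt 2 \approx 3.82$ are then verified numerically.

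The main obstacle is establishing that the maximum over the octahedron genuinely equals $\sqrt 5 + \sqrt 2$ and not something larger: the convexity argument justifies restricting to vertices, but I must be careful that the two norm terms cannot be simultaneously maximized by the same configuration in a way that beats this value, and I must exhaust the inequivalent vertex triples. Invoking convexity of $\vec r \mapsto \lVert \vec a + \vec r\rVert$ to push each $\vec r_x$ to an extreme point is the clean way to reduce an a priori continuous optimization to a finite check; the symmetry group of $S_3$ should collapse the $6^3$ vertex assignments to a handful of genuinely distinct cases. Finally, for the upper strict inequality I would simply note that the quantum optimum $1 + 2\sqrt 2$ is attained by magic (non-stabilizer) preparations — the states in Eq.~(\ref{eq: states_measurements})--(\ref{eq: optimalMS_S3}) lie strictly outside the octahedron — so by Lemma~\ref{lemma: violation_unitary} the stabilizer value must fall strictly below it, which is consistent with the numerical gap.
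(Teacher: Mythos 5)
Your proposal follows essentially the same route as the paper's proof: optimize the measurements first to reduce $S_3$ to the norm expression $\lVert\vec r_1+\vec r_2-\vec r_3\rVert_2+\lVert\vec r_1-\vec r_2\rVert_2$ of Eq.~(\ref{eq.S3withr}) (the paper imports this from Ref.~\cite{PhysRevResearch.2.043106}, while you re-derive it), restrict the Bloch vectors to the six octahedron vertices (you justify this via convexity, which the paper leaves implicit), and identify the optimal vertex triple giving $\sqrt 5+\sqrt 2$; your maximizer $\vec r_1=\hat z,\ \vec r_2=\hat x,\ \vec r_3=-\hat x$ is the same as the paper's ($\vec r_2\perp\vec r_1$, $\vec r_3=-\vec r_1$) up to the relabeling symmetry $1\leftrightarrow 2$. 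One caveat: your closing argument for the strict upper inequality---that the quantum optimum is attained by non-stabilizer preparations, hence by Lemma~\ref{lemma: violation_unitary} the stabilizer value must be strictly smaller---is not valid. That lemma runs in the opposite direction (violation $\Rightarrow$ no unitary maps the states into $S$), and in general the fact that \emph{one} optimal configuration is non-stabilizer does not exclude that some \emph{other} stabilizer configuration also attains the quantum value; indeed this happens for $T_2$, where $T_2^{\mathrm{STAB}}=T_2^{Q}=2\sqrt 2$ in Theorem~\ref{thm:TN_stab_values} even though the optimal states can be rotated off the octahedron. Fortunately this remark is redundant in your write-up: the strict inequality $\sqrt 5+\sqrt 2<1+2\sqrt 2$ already follows from your exhaustive vertex computation together with the direct numerical comparison, which is exactly how the paper concludes.
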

\begin{proof}
    Consider the states $\rho_x = (1+ \vec r_x \cdot \vec \sigma)/2$ for $x \in \{1,2,3\}$. Ref.~\cite{PhysRevResearch.2.043106} showed  that optimization over measurements of $S_3$ yields
    \begin{eqnarray}
    \label{eq.S3withr}
    S_3(\{\vec r_x \}_x) = 
    \max_{\{\vec r_x \}_x}\|\vec{r}_1 + \vec{r}_2 - \vec{r}_3\|_2 + \|\vec{r}_1 - \vec{r}_2\|_2.
    \end{eqnarray}
    We consider $S =\mathrm{STAB}$, where $\vec r_x \in \{\pm e_1, \pm e_2, \pm e_3\}$, with $e_1 = (1,0,0)$, $e_2 = (0,1,0)$ and $e_3 = (0,0,1)$. Then, the inequality is optimized by choosing $\vec r_2 \perp \vec r_1$ and $\vec r_3 =-\vec r_1$, yielding $S^\mathrm{STAB}_3=\sqrt 5 + \sqrt 2$.
\end{proof}

Upon substituting the numerical values, we have $3 < S^\mathrm{STAB} \approx 3.65 \lesssim 3.82$ showing a gap between the stabilizer and quantum value, implying a separation on the sets of correlations $\mathcal C_{C,d} \subset \mathcal C_{\mathrm{STAB},d} \subset \mathcal C_{Q,d}$.

\subsubsection{On two stabilizer preparations \label{subsec:two_stabilizer}}
Then, one can ask how much non-stabilizerness is needed in the states in order to saturate the quantum value of the inequality. For the rest of this section, we will explore this question. In fact, define the $2-\mathrm{STAB}$ scenario in which two of the three states are restricted to be stabilizers and the third one can be any quantum state. Then the problem reduces to studying 
\begin{equation}
    S^{2-\mathrm{STAB}}_3(\rho) = \max_{\rho_1, \rho_2 \in \mathrm{ext}(\mathrm{STAB}_2)} \|\vec{r}_1 + \vec{r}_2 - \vec{r}\|_2 + \|\vec{r}_1 - \vec{r}_2\|_2 \;,
\end{equation}
as a function of the non-stabilizerness of $\rho$. Here we considered $\rho_{x \in \{1,2\}}  = (1+ \vec r_x \cdot \vec \sigma)/2$ and $\rho =(1+ \vec r \cdot \vec \sigma)/2$. Note also that the optimization is over the extrema of $\mathrm{STAB}_2$, where $\mathrm{ext}(\mathrm{STAB}_2)$ correspond to the vertices of the qubit stabilizer polytope:
\begin{equation}
    \mathrm{ext}(\mathrm{STAB}_2) = \{\frac 1 2 (1\pm X), \frac 1 2(1 \pm Y), \frac 1  2(1\pm Z)\}\;.
\end{equation}
Without loss of generality, we can always take $\mathrm{Tr}(\rho_1 \rho_2) =(1+ \vec r_1 \cdot \vec r_2)/2 = 1/2$, since orthogonal vectors still maximize the argument of the optimization. One can verify that  the maximum quantum value is reached if the argument is optimized:
\begin{equation}
    \max_{\rho \in \mathcal D(\mathbb C^2)} S^{2-\mathrm{STAB}}_3(\rho) =S^Q_3=1+2\sqrt 2 \;,
\end{equation}
which can be verified to be saturated for example, by $\vec r_1 = (1,0,0)$, $\vec r_2=(0,1,0)$ and $\vec r_3 = (-1/\sqrt 2, -1/\sqrt 2, 0)$. In fact, we can show a stronger result: This inequality self-tests $H$-type non-stabilizer states \cite{bravyi2005universal}, defined as the states in the Clifford orbit of $H = [1+(X+Z)/\sqrt 2]/2$:
\begin{prop}
     $S^{2-\mathrm{STAB}}_3(\rho) =S^Q_3$ if and only if $\rho = 
 C H C^\dagger$, for some $C \in \mathcal C\ell_2$.\label{prop: self_testing}
\end{prop}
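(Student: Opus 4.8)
The plan is to reduce the proposition to an elementary extremization on the Bloch sphere and then match the resulting optimizers with the Clifford orbit of $H$. Writing $\rho = (1 + \vec{r}\cdot\vec{X})/2$ and $\rho_{1,2} = (1 + \vec{r}_{1,2}\cdot\vec{X})/2$ with $\vec{r}_1,\vec{r}_2 \in \mathrm{ext}(\mathrm{STAB}_2)$ (so $\|\vec{r}_1\|=\|\vec{r}_2\|=1$), the quantity to maximize is $f(\vec{r}_1,\vec{r}_2) = \|\vec{r}_1+\vec{r}_2-\vec{r}\|_2 + \|\vec{r}_1-\vec{r}_2\|_2$. I would first parametrize by the angle $\theta$ between $\vec{r}_1$ and $\vec{r}_2$, using $\|\vec{r}_1+\vec{r}_2\|_2 = 2\cos(\theta/2)$ and $\|\vec{r}_1-\vec{r}_2\|_2 = 2\sin(\theta/2)$ for $\theta\in[0,\pi]$.

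Next I would establish a clean, pair-independent upper bound. By the triangle inequality together with the constraint $\|\vec{r}\|_2 \le 1$ (valid since $\rho$ is a state), one has $\|\vec{r}_1+\vec{r}_2-\vec{r}\|_2 \le \|\vec{r}_1+\vec{r}_2\|_2 + 1 = 2\cos(\theta/2)+1$, hence $f \le 1 + 2\big(\cos(\theta/2)+\sin(\theta/2)\big) \le 1 + 2\sqrt{2} = S^Q_3$, where the last inequality saturates only at $\theta=\pi/2$. Tracking the equality conditions yields the crucial characterization: for a \emph{fixed} pair, $f(\vec{r}_1,\vec{r}_2)=S^Q_3$ holds if and only if (i) $\vec{r}_1 \perp \vec{r}_2$, (ii) $\|\vec{r}\|_2 = 1$, and (iii) $\vec{r}$ is antiparallel to $\vec{r}_1+\vec{r}_2$, i.e.\ $\vec{r} = -(\vec{r}_1+\vec{r}_2)/\sqrt{2}$. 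Since the optimization ranges over finitely many stabilizer pairs, $S^{2-\mathrm{STAB}}_3(\rho) = S^Q_3$ if and only if \emph{some} pair meets (i)--(iii); equivalently, $\vec{r}=-(\vec{r}_1+\vec{r}_2)/\sqrt{2}$ for an orthogonal pair of stabilizer vertices.

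It then remains to enumerate these optimal directions and identify them with the $H$-type states. Orthogonal stabilizer vertices lie along distinct coordinate axes, so $\vec{r}_1+\vec{r}_2 \in \{\pm e_i \pm e_j : i\neq j\}$, which produces exactly the twelve unit vectors $(\pm e_i \pm e_j)/\sqrt{2}$; these are the edge-midpoint directions of the stabilizer octahedron, and each is a pure, non-stabilizer (magic) state. Since $H$ has Bloch vector $(e_1+e_3)/\sqrt{2}$, it is one of these twelve. The final step is to show that the Clifford orbit of $H$ covers precisely these twelve states: the single-qubit Clifford group $\mathcal{C}\ell_2$ acts on the Bloch sphere as the proper rotation group of the octahedron, which is edge-transitive, so the orbit of the edge midpoint $H$ is the full set of twelve, closing both directions of the iff.

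The main obstacle I anticipate is this last identification rather than the extremization: one must argue carefully that $\mathcal{C}\ell_2$ realizes the full octahedral rotation group and that this action is \emph{transitive} on the twelve edge midpoints (a single orbit, not several), so that ``$\vec{r}$ is one of the twelve optimizers'' and ``$\rho = CHC^\dagger$ for some $C \in \mathcal{C}\ell_2$'' are genuinely the same condition. The extremization itself is routine once the angle parametrization and the purity bound $\|\vec{r}\|_2 \le 1$ are in place.
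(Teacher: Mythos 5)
Your proposal is correct and follows essentially the same route as the paper's proof: reduce to the Bloch-vector form of $S_3$, show that saturating $S^Q_3$ forces $\vec r_1 \perp \vec r_2$ and $\vec r = -(\vec r_1 + \vec r_2)/\sqrt 2$, and identify the twelve resulting unit vectors $(\pm e_i \pm e_j)/\sqrt 2$ with the Clifford orbit of $H$. The only difference is one of completeness rather than method: the paper obtains the optimizer characterization by citing the known optimal configuration of Eq.~(\ref{eq: optimalMS_S3}) and simply asserts the pair-counting/orbit correspondence, whereas you derive the equality conditions from the triangle inequality and angle parametrization and justify the orbit identification via edge-transitivity of the octahedral rotation action of $\mathcal C\ell_2$ --- both standard facts, so there is no gap.
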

\begin{proof}
    We can use Eq. (\ref{eq: optimalMS_S3}), where our vector to be optimized is now
    \begin{equation}
        \vec r =-\frac{\vec r_1+ \vec r_2}{\sqrt 2}\;,
    \end{equation}
    where $\vec r_1\cdot \vec r_2=0$, corresponding to $\rho_1$ and $\rho_2$ two orthogonal stabilizer states. From the 6 stabilizer states, there are $6 \times 4/2=12$ such pairs, which correspond exactly to the $12$ $H$-type non-stabilizer states.
\end{proof}
This guarantees the self-test of non-stabilizer states by using a prepare-and-measure witness.

One can ask about how much non-stabilizerness is needed to violate this inequality. The natural measure to consider is the non-stabilizerness by trace distance, defined as ~\cite{junior2025geometricanalysisstabilizerpolytope}
\begin{equation}
    \mathcal N \mathcal S(\rho)\equiv \min_{\sigma \in \mathrm{STAB}_2} \frac 1  2\|\rho-\sigma\|_1\;,
    \label{eq:trace_dist}
\end{equation}
plotted on Fig. \ref{fig: MTD} for visual aid. Geometrically, it represents the distance to the stabilizer polytope as measured by the trace norm, defined as $\|A\|_1 \equiv \mathrm{Tr}\sqrt{A^\dagger A}$. It is well-known that it is maximized by $T$-type non-stabilizer states, defined as the states in the Clifford orbit of $T = \frac 1 2(1+ \frac{1}{\sqrt 3}(X+Y+Z))$. 
\begin{figure}
    \centering
    \includegraphics[width=1.\linewidth]{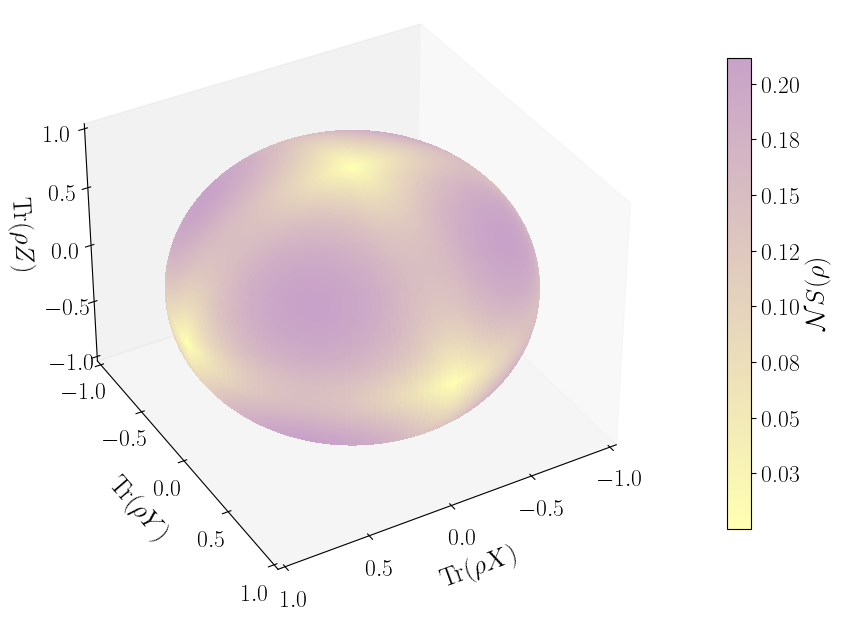}
    \caption{\textbf{Colourmap of $\mathcal{NS}$ of the Bloch Sphere surface.} We plot the value of the non-stabilizerness by trace distance. The center of the whiter zones corresponds to the stabilizer states with zero non-stabilizerness, while the center of the purple circles corresponds to rotations of the state $T$.}  
    \label{fig: MTD}
\end{figure}

In Ref.~\cite{junior2025geometricanalysisstabilizerpolytope}, it was derived the following analytical lower bound for the non-stabilizerness by trace distance:
\begin{equation}
    \mathcal N \mathcal S(\rho) \geq  -\frac{\sqrt 3}{3} \mathcal W^{\mathrm{STAB_2}}(\rho)\;,
\end{equation}
where we defined $\mathcal W^{\mathrm{STAB_2}}(\rho) \equiv\min[0, \{\mathrm{Tr}(\rho A^{\vec q})\}_{\vec q \in \mathbb \{0,1\}^3}] $, with $\{A^{\vec q}\}_{\vec q \in \mathbb \{0,1\}^3}$  the set of operators defining the eight facets of $\mathrm{STAB}_2$ as shown in Eq.~(\ref{eq: H-rep}).  Noting that the set of facets are related by Clifford unitaries, it shows that $\mathcal W^\mathrm{STAB}_2(\rho)$ is a faithful resource measure. In fact, for qubits, the quantity $\mathcal{W}^{\mathrm{STAB}_2}(\rho)$ assigns the same non-stabilizerness value to all states on a hyperplane parallel to a stabilizer polytope facet. This leads to an underestimation of non-stabilizerness for states outside the facet's projection, effectively measuring their distance as if they were in fact parallel to a facet. As such, $\mathcal{W}^{\mathrm{STAB}_2}$ acts as a lower bound to the non-stabilizerness measure $\mathcal{NS}$. Within the facet's projection, however, both measures coincide up to a rescaling by $-\sqrt{3}/3$, since the hyperplane distance matches the minimum distance from the states to the polytope\footnote{One can think of it as a coarsed-grained version of the trace distance that neglects the finiteness of the facets of the stabilizer polytope.}.  In particular, this is true  for the $T$-type non-stabilizer states \cite{bravyi2005universal},  giving the maximum of $\mathcal W^{\mathrm{STAB_2}}_\mathrm{max} \approx 0.366$, or $\mathcal{NS}_{\max}\approx 0.211$.

As pointed out by Proposition \ref{prop: self_testing}, the states that maximize $S^{2-\mathrm{STAB}}_3$ are the $H$-type non-stabilizer states, which maximize neither $\mathcal{NS}$ nor $\mathcal W^{STAB_2}$. It is interesting to study the value $S^{2-\mathrm{STAB}}_3(\rho)$ as a function of $\mathcal W^{\mathrm{STAB}_2}$ .  One can formulate the corresponding optimization problem 
\begin{align}
    s_3(w) \equiv &\max_{\rho \in \mathcal D(\mathbb C^2)} S^{2-\mathrm{STAB}}_3(\rho)\;, \label{eq:max_2stab}\\
    &\text{s.t. } \mathcal W^\mathrm{STAB_2}(\rho) =w, \nonumber
 \end{align}
which in terms of the Bloch vectors it can be rewritten as
\begin{align}
    s_3(w) \equiv &\max_{\vec r=(x,y,z) \in S^2}\|\vec{r}_1 + \vec{r}_2 - \vec{r}\|_2 + \|\vec{r}_1 - \vec{r}_2\|_2\;,\\
    & \text{s.t. } \vec r_1 =(1,0,0) \;,\; \vec r_2 = (0,1,0),\\
    & \frac 1 2\min[0,\{1+{q_1}x + q_2y +{q_3}z\}_{\vec q \in \mathbb \{\pm 1 \}^3}] =w,
\end{align}
where we already restricted to the optimal stabilizer states $\rho_1= (1+X)/2$ and $\rho_2 = (1+Y)/2$. The corresponding result for the inequality value is shown in Fig. \ref{fig: S3_magic}. The  peak at $w/\mathcal W^{\mathrm{STAB_2}}_\mathrm{max}= 0.565$  corresponds to the $H$-type non-stabilizer states, which indeed reaches the quantum maximum $S^Q_3 = 1+ 2\sqrt 2$. For larger violations, the optimal strategy cannot be obtained, and the value of the inequality again decreases to the value obtained by the $T$-type non-stabilizer states.  Finally, for completeness, we consider the violation of $S_3$ when all states are allowed to have the same non-stabilizerness. Since the optimization considers more states (not only the stabilizers), there exist several values of $w$ for which the optimal quantum strategy can be constructed, achieving the maximal quantum value $S_3^Q$. This is shown in the inset of Fig.~\ref{fig: S3_magic}. 

\begin{figure}
    \centering
    \includegraphics[width=1\linewidth]{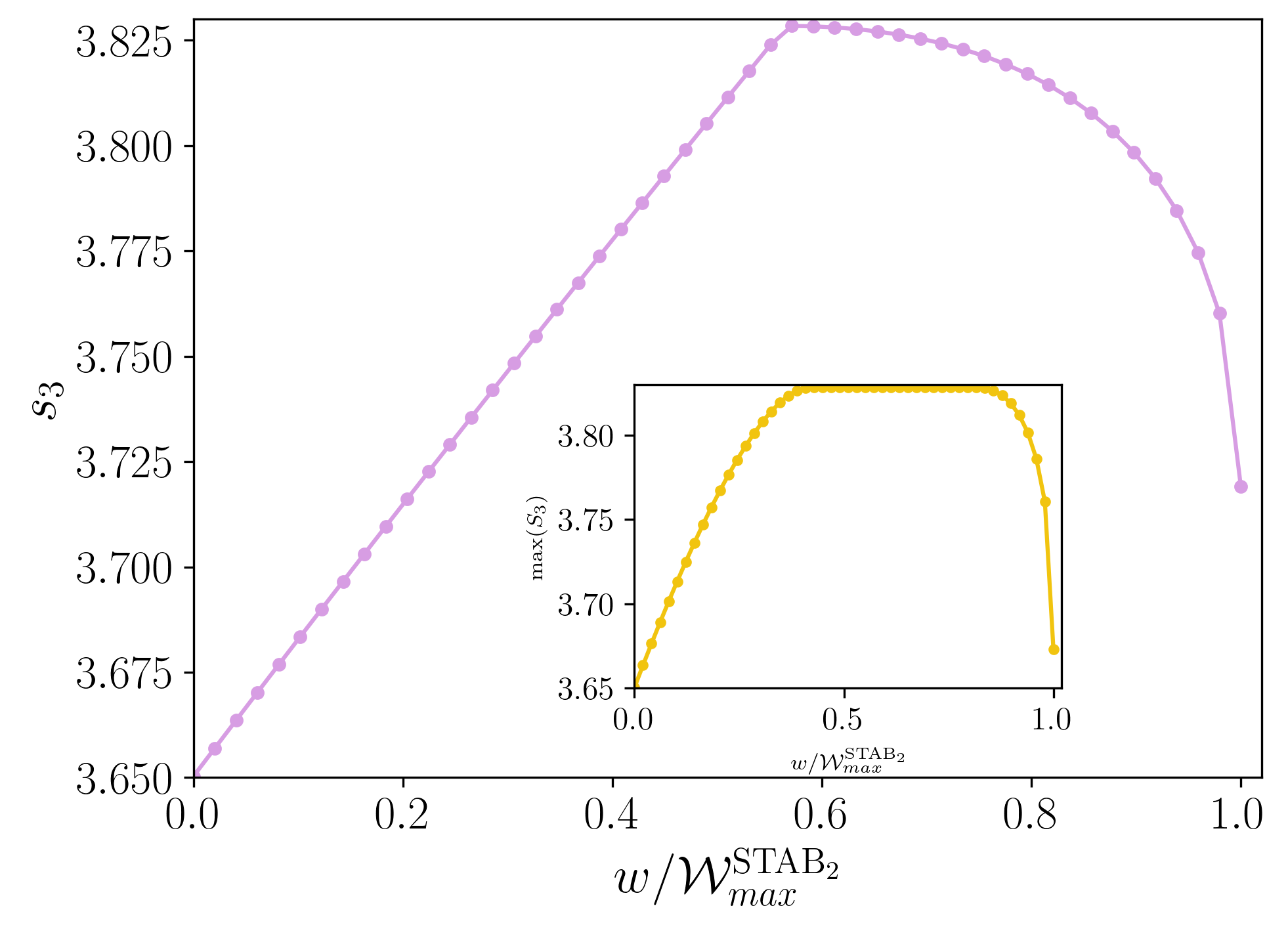}
    \caption{\textbf{Violation of $S_3$ as a function of  non-stabilizerness.}  In the main plot, we vary the non-stabilizerness of one state while keeping the other two as stabilizers. The maximum violation of $S_3$ occurs for the $H$ state with $w/\mathcal W^{\mathrm{STAB_2}}_\mathrm{max}(H) \approx 0.565$. As non-stabilizerness increases, the quantum bound is no longer reached due to a reduced set of compatible states. The final point corresponds to $w/\mathcal W^{\mathrm{STAB_2}}_\mathrm{max}(T) =1$ and $s_3(w) = 3.719$. The inset shows $S_3$ as a function of the non-stabilizerness when all three states belong to the same hyperplane $w$. A range of non-stabilizerness values still achieves the quantum maximum, but $S_3$ decreases for highly non-stabilizer states, reaching $s_3 = 3.6729$ for rotated $T$ states. 
    } 
    \label{fig: S3_magic}
\end{figure}

\subsection{The tilted inequality $S_3(t)$}
In this section we consider the recently proposed inequality~\cite{taoutioui2025assymetry} 
\begin{equation}
    S_3(t) = 2t(E_{11}+E_{21}-E_{31}) + 2(1-t)(E_{12}-E_{22}),\label{eq: tilted_s3}
\end{equation} 
defined by $t\in[0,1]$. In the same reference, the authors obtained the quantum value 
    \begin{equation}
        S_3^Q(t) = 2t + 4\sqrt{t^2 + (1-t)^2}
    \end{equation}
and the  classical value 
\begin{equation}
    S_3^C(t) =
    \begin{cases}
        4-2t\quad &\text{if}\quad t <1/2\\
        6t \quad  &\text{if}\quad 1/2\leq t
    \end{cases}.  
\end{equation}
For consistency we have multiplied by 2 the quantities defined in~\cite{taoutioui2025assymetry}. Note that for $t=1/2$ the above expressions reduced to the standard inequality $S_3$ and the corresponding classical and quantum values.  We ask the question whether by considering this inequality  there exists a gap between the $2-\mathrm{STAB}$ and the $\mathrm{STAB}$ scenarios for some values of $t$. By following a similar approach to that of the previous section we can state the following theorem. 
\begin{thm}
    \label{thm:tilted_witness_bounds}

Consider the tilted witness defined by Eq.~(\ref{eq: tilted_s3}) for the $(|\mathbb{X}|=3, |\mathbb{Y}|=2, |\mathbb{B}|=2)$ prepare-and-measure scenario, defined for a tilting parameter $t \in [0, 1]$ and qubit preparations $\{\rho_1,\rho_2, \rho_3\}$ .
The maximum achievable values for different classes of qubit preparations are given as follows:
\begin{enumerate}
    \item[(i)] \textbf{The Stabilizer Bound:} If all preparations are restricted to be stabilizer states, the maximum value is:
    \begin{equation}
        S_3^{\mathrm{STAB}}(t) = \max\left(6t, \ 4-2t, \ 2t\sqrt{5} + 2(1-t)\sqrt{2}\right)
    \end{equation}

    \item[(ii)] \textbf{The 2-Stabilizer Bound:} If at most one preparation is a non-stabilizer (magic) state, the maximum value is:
    \begin{equation}
        S_3^{\mathrm{2-STAB}}(t) = \max\left(6t, \ 4-2t, \ 2t+2\sqrt{2}\right)
    \end{equation}

    \item[(iii)] \textbf{The Quantum Bound:} This bound is already achievable with two non-stabilizer states.
\end{enumerate}
\end{thm}

\begin{proof}
   \textit{(i)} Consider the form of the states and measurements given in Eq.~(\ref{eq: states_measurements}), then by optimizing over the measurements as done in \cite{PhysRevResearch.2.043106}, one can rewrite this inequality as 
\begin{equation}
 S_3(t) = 2t\|\vec{r}_1 + \vec{r}_2 -\vec{r}_3\|_2  + 2(1-t)\|\vec{r}_1-\vec{r}_2\|_2.
\end{equation} 
Since $S_3(t)$ is linear on the states, the maximization is carried on the extremal set of the stabilizer polytope. Therefore, it is enough to consider  $\vec{r}_1$, $\vec{r}_2$, and $\vec{r}_3$ as the bloch vectors of the vertices of $\mathrm{STAB}_d$. Let us focus on the states $\vec{r}_1$ and $\vec{r}_2$. There are three cases:
\begin{itemize}
    \item $\vec{r}_1=\vec{r}_2$.\\ This implies
    \begin{equation}
         S_3(t) =2 t\|2\vec{r}_1  -\vec{r}_3\|_2,  
    \end{equation}
    which is maximized if we take $\vec{r}_3 = -\vec{r}_1$. Therefore $S^{\max_1}_3(t) = 6t$.

    \item $\vec{r}_1=-\vec{r}_2$.\\ This implies
    \begin{equation}
         S_3(t) = 2t\|-\vec{r}_3\|_2  +2(1-t)\|2\vec{r}_1\|_2,
    \end{equation}
    giving the maximum  $S^{\max_2}_3(t) = 4-2t  $
    \item $\vec{r}_1\perp\vec{r}_2$.\\ This corresponds to the optimal strategy found for the original inequality $S_3(1/2)$, $\vec{r}_1\perp\vec{r}_2$ and $\vec{r}_3 =-\vec{r}_1$,
    giving the maximum  $S^{\max_3}_3(t) = 2t\sqrt{5}+2(1-t)\sqrt{2}$.
    Therefore the stabilizer bound follows. 
\end{itemize}
\textit{(ii)} As for the $2-\mathrm{STAB}$ value, we now consider the first two preparations to be stabilizer states, while the third one is free to be any state. Again we have the three cases mentioned above. However, the first two cases result in the same maximum, attained by a pure state $\rho_3$. The third case changes and now we have:
\begin{itemize}
        \item $\vec{r}_1\perp\vec{r}_2$.\\ 
        The inequality $S_3(t)$ achieves its maximum for  $\vec{r}_3 = -(\vec{r}_1 + \vec{r}_2)/\|\vec{r}_1 + \vec{r}_2\|_2 = -(\vec{r}_1 + \vec{r}_2)/\sqrt{2}$, resulting in 
        \begin{align}
             S_3(t) &= 2t(\|\vec{r}_1 + \vec{r}_2\|_2 +1)  + 2(1-t)\|\vec{r}_1 - \vec{r}_2\|_2, \label{eq: s3t_two_Stabs}\nonumber\\
                    &= 2t(\sqrt{2} +1)  + 2\sqrt{2}(1-t).
        \end{align}
        Therefore $S^{\max_{3'}}_3(t) =  2t+2\sqrt{2}$.

\end{itemize}
 Which implies  the bound $S_3^{\mathrm{2-STAB}}(t)$  shown in the theorem. Finally, for \textit{(iii)}, the full quantum bound is achieved even when one preparation is restricted to be a stabilizer state. This is because the optimal value, derived from Eq. (\ref{eq: s3t_two_Stabs}), depends only on the relative angle between two preparation vectors. Fixing one vector to be a stabilizer does not constrain the ability to achieve the optimal angle with a second, arbitrary quantum state. Therefore, the bound for one stabilizer and two magic states is identical to the full quantum bound.
\end{proof}
In the range $t\in[0,1]$, by comparing the different cases, we can express these bounds as 
\begin{equation}
    S_3^{\mathrm{STAB}}(t) =
    \begin{cases}
        4-2t\quad &\text{if}\quad t <t_0\\
        2t\sqrt{5}+2(1-t)\sqrt{2} \quad &\text{if}\quad t_0\leq t \leq t_1\\
        6t\quad &\text{if}\quad t_1<t
    \end{cases}.  
\end{equation}
with $t_0 = \frac{\sqrt{2}-2}{\sqrt{2}-\sqrt{5}-1}$, $t_1= \frac{\sqrt{2}}{3+\sqrt{2}-\sqrt{5}}$
and
\begin{equation}
    S_3^{\mathrm{2-STAB}}(t) =
    \begin{cases}
        4-2t\quad &\text{if}\quad t < 1- 1/\sqrt{2}\\
       2t+2\sqrt{2} \quad &\text{if}\quad 1- 1/\sqrt{2} \leq t \leq 1/\sqrt{2} \\
        6t\quad &\text{if}\quad 1/\sqrt{2}<t 
    \end{cases}.
\end{equation}
In Fig.~\ref{fig: S3_tilted} we plot these curves, and we show that there exist some values of $t$ where there is a gap between the classical, stabilizer, 2-stabilizer and quantum value. The shaded regions show the values for which the witnes $S_3(t)$ certifies the presence of  one non-stabilizer state (green area) and two non-stabilizer states (yellow area). 

\begin{figure}
    \centering
    \includegraphics[width=1\linewidth]{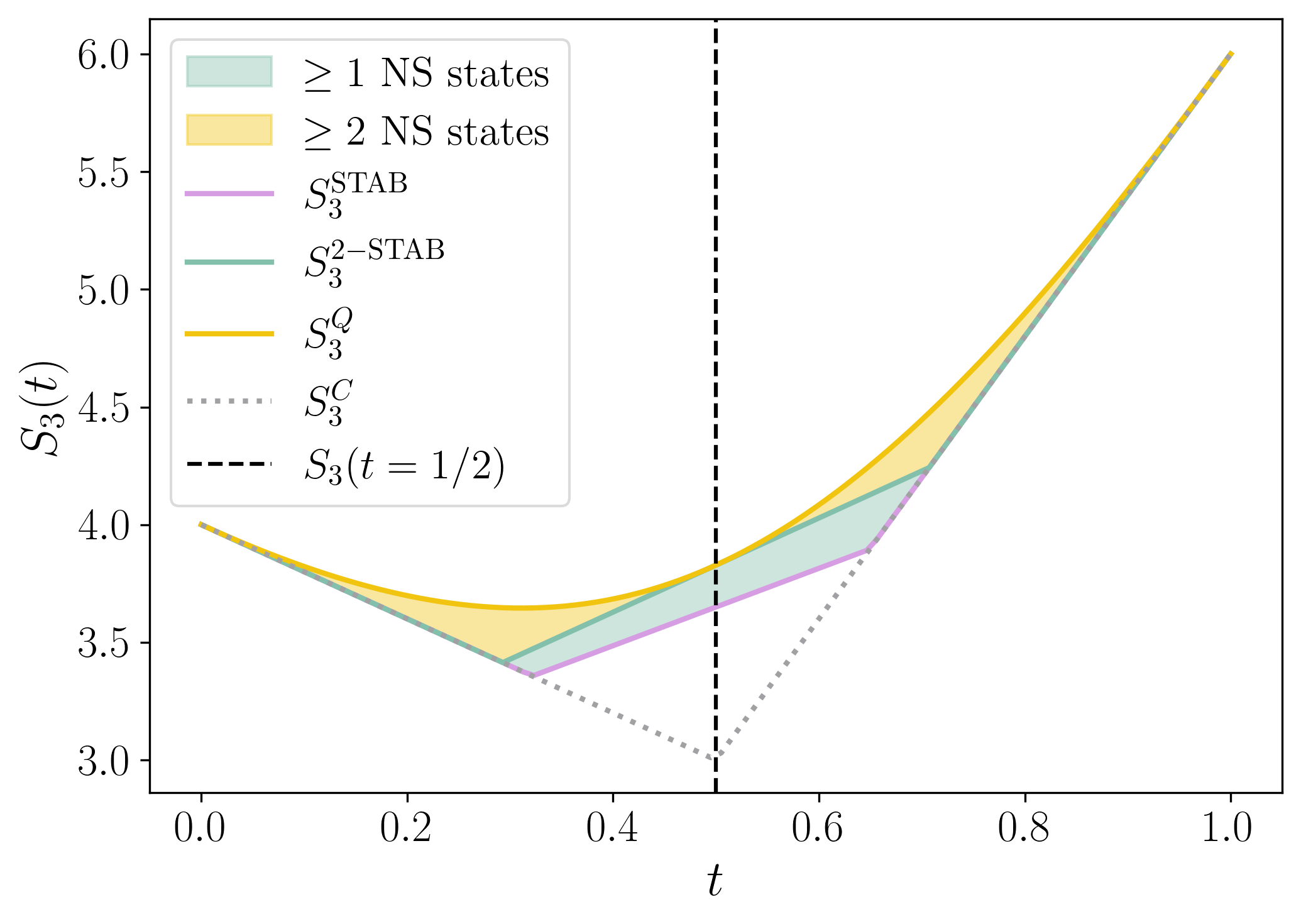}
    \caption{\textbf{Distinct bounds for the inequality $S_3(t)$}. The plot displays the Classical ($S_3^C$, dotted gray), Stabilizer ($S_3^{\mathrm{STAB}}$, purple), 2-Stabilizer ($S_3^{2-\mathrm{STAB}}$, green), and Quantum ($S_3^Q$, yellow) bounds. The 2-Stabilizer bound represents the maximum value achievable with at most one non-stabilizer (NS) state. The vertical dashed line at $t=1/2$ marks the standard, non-tilted witness. The shaded regions highlight the advantage of using NS states: a value in the light green area certifies at least one NS state, while a value in the yellow area certifies at least two. The clear separation of these bounds demonstrates that the tilted witness can provide a more fine-grained certification of non-stabilizerness.  }
    \label{fig: S3_tilted}
\end{figure}

\section{Quantum random access codes}\label{sEC:QUANTUM RANDOM ACCESS CODES}

Although it is now clear that a separation already appears even in the minimal PAM scenario, we will present a class of scenarios, with an increasing number of preparations with witnesses corresponding to a physically-motivated quantity.

\subsection{Qubit QRACS \label{sec:qubitqracs}}
The witnesses considered in this section arise in the study of quantum random access codes \cite{ambainis2009quantumrandomaccesscodes,PhysRevA.85.052308,tavakoli2015quantum}. In this scenario, there are two parties, Alice and Bob, where Alice encodes a bit-string $x \in \{0,1\}^N$ into a single qubit state and sends it to Bob, who extracts one of the $N$ bits of $x$ by performing a measurement. The encoding strategy are $2^N$ assignments of bit-strings into states, $x \in \{0,1\}^N \mapsto \rho_x \in \mathcal D(\mathbb C^2)$ and the decoding strategy is given by the choice of $N$ measurements $(M^1, \cdots, M^N)$. The average success probability of Bob correctly measuring the bit-string in such a setup is
\begin{equation}
    \bar p_N = \frac{1}{2^N N } \sum_{x \in \{0,1\}^N} \sum_{y=1}^N p(x_y|x,y)\;,
\end{equation}
where $p(b|x,y) =\mathrm{Tr}(\rho_x M^y_{b})$ is the probability of obtaining of the outcome $b$ of the measurement $M^y$ in the encoded state $\rho_x$. Those behaviors are clearly described by the set $\mathcal C_{Q,d}$ of a $(d=2, |\mathbb X|=2^N, |\mathbb Y|=N, |\mathbb B|=2)$ PAM scenario, which we will also refer as the $N \to 1 $ qubit QRAC \cite{ambainis2009quantumrandomaccesscodes}. The other two corresponding sets $\mathcal C_{C,d}$ and $\mathcal C_{\mathrm{STAB},d}$ are defined by restricting the encoding to classical and stabilizer states, respectively. Remarkably, as we will discuss below, the average success probability is a separating witness for various values of $N \geq 3$:
\begin{equation}
  \bar p^C_N < \bar p^\mathrm{STAB}_N < \bar p^Q_N.
\end{equation}
 First, note that $M^{b=1}_y = 1- M^{b=0}_y$, which allows us to write:
\begin{align}
    \bar p_N &= \frac 1 2 \left(1 + \frac{1}{2^{N-1} N} \sum_{x \in \{0,1\}^N} \sum_{y=1}^N (-1)^{x_y}p(0|x,y)\right)\nonumber\\
    & =\frac 1 2 \left(1 + \frac{T_N}{2^{N-1} N}\right)
\end{align}
where we have defined 
\begin{equation}
    T_N \equiv \sum_{x \in \{0,1\}^N} \sum_{y=1}^N (-1)^{x_y}p(0|x,y).
\end{equation}

The optimization over $T_N$ can be written in terms of Bloch  vectors as stated in the following Lemma:
\begin{lem}\label{lemma: ASP_QRACS}
    Let $ S \in \{C, \mathrm{STAB},Q\}$ be one of the three scenarios in the $N \to 1$ qubit QRAC. Then, the average success probability is given by
    \begin{equation}
        \bar p^S_N = \frac{1}{2} \left(1 + \frac{T_N^S}{2^{N-1} N}\right)\;,
    \end{equation}
    where
    \begin{equation}
        T^S_N = \max_{\{\vec r_x\} \in \mathrm{ext}(S)} \frac 1 2 \sum_{y=1}^N \|\sum_{x \in \{0,1\}^N}(-1)^{x_y} \vec r_x \|_2\;,
        \label{eq:TSN}
    \end{equation}
    and $\{\vec r_x\}$ are the Bloch vectors of the states with the encoded message. The extremal set $\mathrm{ext}(S)$ is defined by 
    \begin{equation}
        \mathrm{ext}(S) = 
        \left \{
        \begin{aligned}
            &\{e_3, -e_3\}\;, \mathrm{for\;} S=C\;,\\
            &\{\pm e_1,\pm e_2,\pm e_3\}\;, \mathrm{for\;} S= \mathrm{STAB}\;,\\
            &S^2 \;, \mathrm{for\;} S= Q\;,
        \end{aligned}
        \right.
    \end{equation}
    with $e_1 = (1,0,0)$, $e_2=(0,1,0)$ and $e_3= (0,0,1)$ .
\end{lem}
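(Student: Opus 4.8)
The plan is to reduce the definition of $\bar p^S_N$ to a purely geometric optimization over Bloch vectors. Since the relation $\bar p_N = \frac{1}{2}\bigl(1 + T_N/(2^{N-1}N)\bigr)$ has already been established, maximizing $\bar p^S_N$ is equivalent to computing $T^S_N = \max_{p \in \mathcal C_{S,d}} T_N$, and it suffices to prove the claimed closed form of $T^S_N$. I would start by parametrizing both the encodings and the two-outcome measurements on the Bloch sphere, $\rho_x = (\mathbb{I} + \vec r_x \cdot \vec X)/2$ and $M^y_0 = (\mathbb{I} + \vec s_y \cdot \vec X)/2$, where positivity of $M^y_0$ and $M^y_1 = \mathbb{I} - M^y_0$ forces $\|\vec s_y\|_2 \le 1$. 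Born's rule then yields $p(0|x,y) = \frac{1}{2}(1 + \vec r_x \cdot \vec s_y)$.

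Substituting this into the definition of $T_N$, the constant contribution $\frac{1}{2}\sum_{x,y}(-1)^{x_y}$ vanishes, because for each fixed $y$ exactly half of the bit-strings satisfy $x_y = 0$ and half satisfy $x_y = 1$, so $\sum_{x}(-1)^{x_y} = 0$. This leaves $T_N = \frac{1}{2}\sum_{y=1}^N \vec s_y \cdot \vec v_y$ with $\vec v_y \equiv \sum_{x \in \{0,1\}^N}(-1)^{x_y}\vec r_x$. I would then optimize in two stages. The measurement optimization is unconstrained by $S$ (only the encodings are restricted) and decouples across $y$, since each $\vec s_y$ enters only the $y$-th summand; by Cauchy--Schwarz, $\vec s_y \cdot \vec v_y \le \|\vec s_y\|_2\,\|\vec v_y\|_2 \le \|\vec v_y\|_2$, with equality for the projective choice $\vec s_y = \vec v_y/\|\vec v_y\|_2$. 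This gives $T_N = \frac{1}{2}\sum_{y} \|\vec v_y\|_2$, precisely the displayed objective in Eq.~(\ref{eq:TSN}).

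It remains to maximize over the encoding states within each scenario. The resulting objective is a sum of Euclidean norms of vectors depending linearly on the tuple $\{\vec r_x\}$, hence a convex function of that tuple; a convex function over a convex set attains its maximum at an extreme point, so for each $S$ the optimum is reached on $\mathrm{ext}(S)$. I would then identify these extreme sets: the quantum encodings range over $\mathcal D(\mathbb C^2)$, whose extreme points are the pure states, i.e.\ the full sphere $S^2$; the stabilizer encodings range over $\mathrm{STAB}_2$, whose six vertices are $\{\pm e_1,\pm e_2,\pm e_3\}$; and the classical encodings, by Proposition~\ref{prop:unitary_symmetry} and the construction in Lemma~\ref{lem:non-trivial}, may be taken incoherent in the computational basis, i.e.\ diagonal, whose extreme points are $\{\pm e_3\}$.

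The main obstacle is the final step: one must justify rigorously that no mixed or interior preparation can outperform an extremal one, which rests on the convexity of the sum-of-norms functional, and then correctly pin down the extreme sets—most delicately in the classical case, where the reduction to the $e_3$ axis is not immediate and requires invoking the fixed-basis (set-incoherent) structure of $\mathcal C_{C,d}$ together with the unitary-invariance symmetry to fix that axis without loss of generality. The measurement and quantum-state steps are routine once the Bloch parametrization is in place.
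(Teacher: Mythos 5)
Your proof is correct and follows essentially the same route as the paper's: Bloch parametrization of states and measurements, cancellation of the constant term over bit-strings, alignment of each measurement vector with $\vec R_y = \sum_x (-1)^{x_y}\vec r_x$ via Cauchy--Schwarz, and restriction of the encodings to $\mathrm{ext}(S)$. The only difference is that you explicitly justify the restriction to extremal states via convexity of the sum-of-norms objective and spell out the classical case through the set-incoherent structure and unitary invariance, steps the paper's proof leaves implicit.
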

\begin{proof}
    We can write both the states and measurements in the Bloch sphere as:
    \begin{align}
        \rho_x &= \frac 1 2(1 + \vec r_x \cdot \vec \sigma)\;,\\
        M^0_y &= \frac 1 2(1 + \vec v_y \cdot \vec \sigma)\;,
    \end{align}
    yielding the following form for the optimized witness:
    \begin{align}
        T^S_N &= \max_{\substack{\{\rho_x\} \subseteq S\\ \{M_y\} \subseteq \mathcal M_{N}(\mathbb C^2)}}\sum_{x \in \{0,1\}^N} \sum_{y=1}^N (-1)^{x_y}\mathrm{Tr}(\rho_x M^0_y)\\
        &= \max_{\substack{\{\vec r_x\} \subseteq \mathrm{ext}(S)\\ \{\vec v_y\} \subseteq S^2}} \sum_{x \in \{0,1 \}^N} \sum_{y=1}^N \frac{(-1)^{x_y}}{2}(1+ \vec v_y \cdot \vec r_x)\;.
    \end{align}
    We now proceed to optimize over the measurements. The first term vanishes, due to the fact that every bit string $x \in \{0,1\}^N$ has a corresponding parity-flipped string $\mathrm{XOR}(x)$, canceling in pairs. Defining $\vec R_y = \sum_{x}(-1)^{x_y} \vec r_x$, we arrive at
    \begin{equation}
        T^S_N = \max_{\substack{\{\vec r_x\} \subseteq \mathrm{ext}(S)\\ \{\vec v_y\} \subseteq S^2}} \frac 1 2 \sum_{y=1}^N  \vec v_y \cdot \vec R_y\;,
        \label{eq:TSN_sum_of_inner}
    \end{equation}
    which is a sum of inner products, maximized when $\vec v_y = \vec R_y/\|\vec R_y\|_2$. By substituting in Eq. (\ref{eq:TSN_sum_of_inner}), we arrive at the result. 
\end{proof}
For $N=1$,  $|\mathbb X|= 2^1 =d$ and then due to Lemma \ref{lem:non-trivial} the scenario is trivial. Then, our task is to consider the optimization problem only for $N \geq 2$. To do so, we present a helpful proposition.
\begin{prop}
    Let $\{\vec r^*_x\}_x \in \mathrm{argmax} (T^S
    _N)$. Then, $\vec r^*_x=- \vec r^*_{\mathrm{XOR}(x)}$.
    \label{prop:XOR}
\end{prop}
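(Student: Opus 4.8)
The plan is to localize the optimization to a single complementary pair $\{x,\mathrm{XOR}(x)\}$ and exploit convexity. Starting from Lemma~\ref{lemma: ASP_QRACS}, write $\vec R_y = \sum_{x}(-1)^{x_y}\vec r_x$, so that $T^S_N = \tfrac12\sum_y \|\vec R_y\|_2$. The crucial elementary observation is that flipping every bit flips every sign, $(-1)^{(\mathrm{XOR}(x))_y} = -(-1)^{x_y}$. Hence, grouping the $2^N$ strings into the $2^{N-1}$ complementary pairs and fixing a representative $x$ of each pair $p$, one gets $\vec R_y = \sum_p (-1)^{x_y}\,(\vec r_{x}-\vec r_{\mathrm{XOR}(x)})$; that is, $\vec R_y$, and therefore the whole objective, depends on each pair only through the difference $\vec c_p \equiv \vec r_{x}-\vec r_{\mathrm{XOR}(x)}$.

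First I would fix all pairs except one, say $p_0$, and view the objective as a function $g(\vec c_{p_0}) = \tfrac12\sum_y \|\vec u_y + \vec c_{p_0}\|_2$ of that single difference, where the $\vec u_y$ absorb the fixed contributions and the signs. This $g$ is convex, and at a maximizer the pair $p_0$ must maximize $g$ over the set of \emph{achievable} differences $\{\vec a - \vec b : \vec a,\vec b\in\mathrm{ext}(S)\}$. For $S=Q$ this set is exactly the full ball of radius $2$, and $\|\vec a-\vec b\|_2=2$ forces $\vec a=-\vec b$ on the sphere; so it remains to prove that every maximizer of $g$ over this ball lies on its boundary. This I would get from convexity: an interior maximizer would make $g$ locally constant at its maximal value, which is impossible for a non-constant convex function since the flat regions of a convex function are minima, not maxima. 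Hence $\|\vec c_{p_0}\|_2=2$ and $\vec r_x = -\vec r_{\mathrm{XOR}(x)}$, and running the argument over all pairs gives the claim for qubits in the quantum case.

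For $S=\mathrm{STAB}$ and $S=C$ the achievable differences form a finite set, namely the differences of the octahedron vertices $\{\pm e_i\}$ and of $\{\pm e_3\}$ respectively, whose convex hull is a centrally symmetric polytope whose vertices $\pm 2e_i$ (resp.\ $\pm 2e_3$) all have norm $2$ and are realized only by antisymmetric pairs. Since the convex $g$ is maximized at a vertex, one can always \emph{replace} pair $p_0$ by an antisymmetric pair without lowering the value, and iterating over all pairs produces an antisymmetric maximizer, which already suffices to halve the number of free states in the QRAC optimization. The main obstacle is upgrading this to the stated claim that \emph{every} maximizer is antisymmetric in the polytope cases: here $g$ may be affine along an edge of the difference polytope joining two maximal vertices (for instance $[2e_1,-2e_2]$), and then a non-antisymmetric choice such as $\vec c_{p_0}=e_1-e_2$ could tie the maximum. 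Ruling this out requires showing that such edge-degeneracies cannot occur at an actual global maximizer, e.g.\ via a strict-convexity or genericity argument on the $\vec u_y$ induced by the remaining pairs; this is the delicate part, and the quantum case $S=Q$ is free of it precisely because its difference set is a round ball rather than a polytope.
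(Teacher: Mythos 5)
Your argument is correct in what it actually establishes, and after the shared first step it proceeds genuinely differently from the paper. Both proofs begin with the same pair decomposition, writing $\vec R_y$ in terms of the differences $\vec c_p=\vec r_x-\vec r_{\mathrm{XOR}(x)}$ over representatives of the complementary pairs. The paper then linearizes the norms via $\|\vec R_y\|_2=\max_{\vec u_y\in S^2}\vec u_y\cdot\vec R_y$, exchanges the order of the two maximizations, and observes that for fixed $\{\vec u_y\}$ the resulting objective $\tfrac12\sum_{x}\vec V_x\cdot(\vec r_x-\vec r_{\mathrm{XOR}(x)})$, with $\vec V_x=\sum_y(-1)^{x_y}\vec u_y$, is maximized pair by pair by antipodal choices. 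You instead freeze all pairs but one and maximize the convex function $g(\vec c_{p_0})=\tfrac12\sum_y\|\vec u_y+\vec c_{p_0}\|_2$ over the set of achievable differences, invoking extreme-point and boundary arguments. The trade-off: the paper's dual trick treats all three scenarios $S\in\{C,\mathrm{STAB},Q\}$ uniformly and directly exhibits an antipodal optimizer; your route needs a case split, but for $S=Q$ it delivers a strictly stronger conclusion — that \emph{every} quantum maximizer is antipodal (interior maximizers of a non-constant convex function are impossible, and norm-$2$ differences of unit vectors force antipodality) — which the paper's proof does not give, since exhibiting one antipodal optimizer says nothing about the others.

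Concerning the "delicate part" you flag for $S=C,\mathrm{STAB}$: do not try to close it, because the every-maximizer reading of the Proposition is genuinely false there, and the edge degeneracy you anticipate actually occurs. Take $S=C$, $N=2$, and the configuration $\vec r_{00}=\vec r_{01}=\vec r_{10}=e_3$, $\vec r_{11}=-e_3$. Then $\vec R_1=\vec R_2=2e_3$, so this configuration attains $\tfrac12(\|\vec R_1\|_2+\|\vec R_2\|_2)=2=T^C_2$, i.e.\ it is a global classical maximizer, yet $\vec r_{01}=\vec r_{10}$ rather than $\vec r_{01}=-\vec r_{10}$. (In your language: with $\vec c_1=2e_3$ fixed, $g$ is constant on the segment $\mathrm{conv}\{\pm 2e_3\}$ containing all classical differences, so the non-antipodal difference $\vec c_2=0$ ties the maximum.) Consequently Proposition \ref{prop:XOR} can only be read — and is only used by the paper, e.g.\ in the proof of Theorem \ref{thm:TN_stab_values} and in restricting the numerical searches — as the statement that the maximum in Eq.~\eqref{eq:TSN} is \emph{attained} by an antipodal configuration. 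The paper's own proof establishes exactly this existence version and no more; your replace-and-iterate argument proves the same thing. So your proposal is complete for everything the Proposition is needed for, and in the quantum case it even sharpens it.
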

\begin{proof}
    Consider a splitting of the set of bit strings of the form $\{0,1\}^N = S_+ \sqcup S_-$, where $S_- = \mathrm{XOR}(S_+)$. This is,  $S_-$ are the bitstrings of $S_+$ under the XOR operation. It follows that
    \begin{align}
        \vec R_y &= \sum_{x \in \{0,1\}^N} (-1)^{x_y} \vec r_x \\
        &= \sum_{x \in S_+}(-1)^{x_y} \vec r_x- \sum_{x \in S_+} (-1)^{x_y} \vec r_{\mathrm{XOR}(x)}\\
        &= \sum_{x \in S_+}(-1)^{x_y}[\vec r_x -\vec r_{\mathrm{XOR}(x)}]\;.
    \end{align}
    This configuration has a witness value of $T_N (\{\vec r_x\}_{x \in S_+ \sqcup S_-}) = 1/2\sum_y \| \vec R_y\|_2$. We express the 2-norm by the optimization
    \begin{equation}
        \|\vec R_y\|_2=\max_{\vec u_y \in S^2}\; \vec u_y \cdot \vec R_y\;,
    \end{equation}
    which allows us to write
    \begin{align}
        \nonumber & T^S_N(\{\vec r_x\}_x) \\ &= \max_{\{\vec u_y\} \in S^2} \frac 1 2 \sum_{x \in S_+} \left(\sum_{y=1}^N(-1)^{x_y} \vec u_y\right) \cdot (\vec r_x-\vec r_{\mathrm{XOR}(x)})\\
        &= \max_{\{\vec u_y\} \in S^2} \frac 1 2 \sum_{x \in S_+} \vec V_x \cdot (\vec r_x-\vec r_{\mathrm{XOR}(x)}) \;,
    \end{align}
    where we defined $\vec V_x = \sum_y (-1)^{x_y}\vec u_y$. The sum of inner products, upon optimizing over $\{\vec r_x\}_x$, is maximized by:
    \begin{equation}
        \vec r_x^* = -\vec r^*_x= \frac{\vec V_x}{\|\vec V_x\|}\;.
    \end{equation}
\end{proof}
For small $N$, we can show the following.
\begin{thm}
    The  set of inequalities
    \begin{itemize}
        \item $T^C_2 = 2 < T^\mathrm{STAB}_2 = T^Q_2=2\sqrt 2$,
        \item $T^C_3 = 6 < T^\mathrm{STAB}_3 = 2\sqrt 6 + \sqrt 2$,
    \end{itemize}
    are true.
    \label{thm:TN_stab_values}
\end{thm}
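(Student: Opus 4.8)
The plan is to collapse every quantity in the statement to a low-dimensional maximization of a sum of Euclidean norms, using only Lemma~\ref{lemma: ASP_QRACS} and Proposition~\ref{prop:XOR}. First I would apply Lemma~\ref{lemma: ASP_QRACS} to write $T^S_N=\tfrac12\sum_{y=1}^N\|\vec R_y\|_2$ with $\vec R_y=\sum_x(-1)^{x_y}\vec r_x$ and $\{\vec r_x\}\subseteq\mathrm{ext}(S)$, and then Proposition~\ref{prop:XOR} to impose $\vec r_{\mathrm{XOR}(x)}=-\vec r_x$, which halves the number of independent vectors. For $N=2$ this leaves two vectors $\vec u,\vec v$ and, after the cancellation, $T_2=\|\vec u+\vec v\|_2+\|\vec u-\vec v\|_2$. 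For $N=3$ it leaves four vectors $\vec a,\vec b,\vec c,\vec d$, and expanding the three $\vec R_y$ yields exactly three of the four Hadamard sign-combinations; denoting them $\vec P_1,\vec P_2,\vec P_3$ gives $T_3=\|\vec P_1\|_2+\|\vec P_2\|_2+\|\vec P_3\|_2$, while the fourth combination $\vec P_4=\vec a-\vec b-\vec c+\vec d$ is absent from the witness.

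For $N=2$ the parallelogram identity $\|\vec u+\vec v\|_2^2+\|\vec u-\vec v\|_2^2=2\|\vec u\|_2^2+2\|\vec v\|_2^2$ pins the sum of the two squared norms to $4$ on unit vectors, so maximizing $\sqrt p+\sqrt q$ subject to $p+q=4$ gives $2\sqrt2$ at $p=q=2$, i.e.\ at orthogonality. Since the orthogonal stabilizer pair $(e_1,e_2)$ already realizes $p=q=2$, one obtains $T^\mathrm{STAB}_2=T^Q_2=2\sqrt2$, whereas the classical alphabet $\{\pm e_3\}$ forces the two vectors to be parallel or antiparallel and hence one norm to vanish, giving $T^C_2=2$.

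For $N=3$ the organizing identity is the Hadamard sum rule $\sum_{i=1}^4\|\vec P_i\|_2^2=4(\|\vec a\|_2^2+\|\vec b\|_2^2+\|\vec c\|_2^2+\|\vec d\|_2^2)=16$, valid because the four sign patterns are mutually orthogonal so all cross terms cancel. In the classical case every vector collapses onto $\pm e_3$, the three terms become $|v_i\cdot s|$ for Hadamard rows $v_i$ and $s\in\{\pm1\}^4$, and Cauchy--Schwarz with $\sum_{i=1}^4(v_i\cdot s)^2=16$ gives $(\sum_{i=1}^3|v_i\cdot s|)^2\le48$; because each $|v_i\cdot s|$ is an even integer this tightens to $6$, attained at $s=(1,1,1,-1)$, so $T^C_3=6$. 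For the stabilizer case I would classify the four signed basis vectors by their distribution among the three axes, which---up to the octahedral symmetry of $\mathrm{ext}(\mathrm{STAB})$---reduces to the partitions $(4,0,0)$, $(3,1,0)$, $(2,2,0)$, $(2,1,1)$; in each partition the coordinates of the $\vec P_i$ are sums of $\pm1$'s constrained by the sum rule, so the admissible norm profiles are finitely enumerable and yield the optima $6$, $\sqrt{10}+2\sqrt2$, $6$, and $2\sqrt6+\sqrt2$. The maximum is the last, realized by $(\vec a,\vec b,\vec c,\vec d)=(e_1,e_1,e_2,e_3)$ with two vectors sharing an axis, so $T^\mathrm{STAB}_3=2\sqrt6+\sqrt2$, and $6<2\sqrt6+\sqrt2$ closes the inequality.

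The hard part is the stabilizer optimization for $N=3$. The delicate point is that the witness omits the fixed combination $\vec P_4$, which is invariant under permutations of the three message bits, so one cannot merely relabel the largest norm into the sum; in the $(3,1,0)$ and $(2,1,1)$ partitions one must track which Hadamard combination carries the large component and confirm that the optimum places it among $\vec P_1,\vec P_2,\vec P_3$ rather than on $\vec P_4$. Once the axis-distribution reduction is set up and the discreteness of the admissible squared norms (forced by the sum rule) is exploited, the residual verification is a short finite check; the real care lies in organizing the octahedral and bit-permutation symmetries so that only these four representative partitions need to be examined.
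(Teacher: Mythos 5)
Your proposal is correct, and it takes a genuinely different route from the paper's. The paper cites the classical bounds $T^C_2=2$ and $T^C_3=6$ from the QRAC literature, handles $N=2$ essentially as you do (its normalization of $\vec r_{00}\pm\vec r_{01}$ is your parallelogram argument in disguise), and obtains $T^{\mathrm{STAB}}_3=2\sqrt6+\sqrt2$ by an exhaustive search over all placements of the four independent Bloch vectors on the six octahedron vertices, displaying one optimizer in its appendix. You instead derive the classical values from scratch—your Cauchy--Schwarz bound $\bigl(\sum_{i=1}^{3}|v_i\cdot s|\bigr)^2\le 48$, sharpened to $6$ by the evenness of each $|v_i\cdot s|$, is an argument the paper does not contain—and you replace the brute-force stabilizer search by a symmetry-organized finite check: the Hadamard sum rule $\sum_{i=1}^{4}\|\vec P_i\|_2^2=16$ together with the reduction to the axis-partitions $(4,0,0)$, $(3,1,0)$, $(2,2,0)$, $(2,1,1)$. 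I verified your per-class optima $6$, $\sqrt{10}+2\sqrt2$, $6$, $2\sqrt6+\sqrt2$: the key structural facts are that any coaxial pair of vectors ``hits'' a complementary two-element subset of the four Hadamard combinations (which is what caps the $(2,2,0)$ class at $6$ and rules out profiles like $4+2\sqrt2$), and that a same-sign coaxial pair can always be arranged so that both hits land among the three witnessed combinations, yielding $2\sqrt6+\sqrt2$ in the $(2,1,1)$ class, attained at $(e_1,e_1,e_2,e_3)$ exactly as you state. Two caveats, neither fatal: your identification $\vec P_4=\vec a-\vec b-\vec c+\vec d$ presupposes the XOR-representatives $\{000,100,010,110\}$ (with the paper's representatives $\{000,100,010,001\}$ the omitted row is $(-,+,+,+)$ up to sign), so the bookkeeping must be kept consistent; and your per-class optima are asserted rather than written out—but the residual verification is finite and no larger than the search the paper itself performs exhaustively. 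Your approach buys hand-verifiability and an explanation of why the optimum has the form it does; the paper's buys immunity from slips in the symmetry reduction.
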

\begin{proof}
    The classical values of the witnesses were already derived in \cite{ambainis2009quantumrandomaccesscodes,PhysRevA.85.052308}. We will now proceed to compute the stabilizer values assuming the result shown in  Proposition \ref{prop:XOR}. For $N=2$, Eq.~(\ref{eq:TSN}) is simplified to
    \begin{align}
        T^\mathrm{STAB}_2 =\max_{\{\vec r_{00}, \vec r_{01}\} \in \mathrm{ext}(\mathrm{STAB}_2)}(&\| \vec r_{00} + \vec r_{01}\|_2 + \nonumber\\
        & \|r_{00}- \vec r_{01} \|_2)\;.
    \end{align}
    By defining $\vec r_+ = (\vec r_{00} + \vec r_{01})/\sqrt 2$ and $\vec r_- = (\vec r_{00}- \vec r_{01})/\sqrt 2$, as vectors of 2-norm 1,  it is clear that the optimal value is $\sqrt 2 (\|\vec r_+\|_2 + \| \vec r_{-}\|_2)\mapsto 2\sqrt 2$. Furthermore, note that this solution can also be obtained if the vectors are also to be optimized over the sphere, $\{\vec r_{00}, \vec r_{01}\} \in S^2$.
    
    For $N=3$, the objective function to be optimized is:
    \begin{equation}
        \sum_{\vec q \in \{-++,+-+,++-\}} \| \vec r_{000} + {q_1} \vec r_{100} + {q_2}\vec r_{010} + {q_3}\vec r_{001}\|_2\;,
    \end{equation}
    where we optimize over $\{\vec r_{000}, \vec r_{100}, \vec r_{010}, \vec r_{001}\} \in \{\pm e_1, \pm e_2, \pm e_3\}$. On this case, an exhaustive search over the configurations yields the solution (one of them displayed in the Appendix \ref{app: stab_optimal_strategies})
    \begin{equation}
        T^\mathrm{STAB}_3 = 2\sqrt 6 + \sqrt 2\;.
    \end{equation}
\end{proof}
The quantum value for $N=3$ was proven to be tight in Ref.~\cite{PhysRevA.98.062307} with the exact value of  $T^Q_3 =4\sqrt{3} \approx 6.928$, larger than $T^\mathrm{STAB}_3\approx 6.313$ computed exactly above. Hence, there is a gap between stabilizer and general qubit states for QRACs with $N=3$. Note that this gap is exact; to obtain it, we considered all the possible combinations of stabilizer vertices. For larger $N$, the exhaustive method among the $O(\exp(\exp N))$ options becomes unfeasible and an alternative formulation of the problem is needed. As done in Eq.~(\ref{eq:max_2stab}), consider:
\begin{align}
    t_N(w) \equiv &\max_{\{\rho_x\}_x \in \mathrm{ext}(\mathrm{STAB}_2)} T_N(\{\rho_x\}_x)\;.\\
    &\mathcal W^{\mathrm{STAB_2}}(\rho)=w \nonumber
\end{align}
Note that $T^{\mathrm{STAB}}_N = t_N(0)$. In terms of the Bloch vectors, we obtain:
\begin{align}
    t_N(w) \equiv &\max_{\{\vec r_x= (r_{x,1}, r_{x,2}, r_{x,3})\}_x \in S^2}  \frac 1 2 \sum_y \| \sum_x (-1)^{x_y} \vec r_x \|_2\;,\\
    & \frac 1 2\min[0,\{1+\sum_{j=1}^3 q_i r_{x,i}\}_{\vec q \in \mathbb \{\pm  \}^3}] =w\;,\;\forall x \in \{0,1\}^N
\end{align}
which allows us to obtain also the numerical values for $N=4$ and $N=5$. Table \ref{tab: TN_values} shows the results compared to the classical and quantum bounds.
\begin{table}[h]
    \centering
    \begin{tabular}{cccc} 
        \toprule
        $N$ & $T_N^C$ &$T_N^\mathrm{STAB}$ & $T^Q_N$ \\
        \midrule
        $2$ &  $2$ & $2\sqrt 2$ & $2 \sqrt 2$ \\
        $3$ & $6$ & $2 \sqrt 6 + \sqrt 2 \approx 6.313 $ & $\approx 6.928$\\
        $4$ & $12$ & $10 \sqrt 2 \approx 14.142$  &$\approx 15.458$\\
        $5$ & $30$ & $16 \sqrt 2 + 8 \approx 30.627$  &$\approx 34.172$\\
        \bottomrule
    \end{tabular}
    \caption{\textbf{Maximum values of $T_N$ restricted to the three different scenarios.} The values of the classical scenario are computed analytically, and in the quantum scenario are estimated numerically. These can be found in \cite{PhysRevA.98.062307,ambainis2009quantumrandomaccesscodes,PhysRevA.85.052308}. For $N=2,3$, the exact stabilizer value is exact as stated in Theorem \ref{thm:TN_stab_values} while for $N=4,5$ it was found numerically by computing $t_N(0)$.}
    \label{tab: TN_values}
\end{table}

 Interestingly, for $N \geq 3$, a hierarchy between the scenarios is found. For $ w >0$, and assuming that all prepared states have the same degree of non-stabilizerness, the corresponding stabilizer optimization problem is shown in Fig. \ref{fig: T_magic}. For $N=2$, we see that it always reaches the quantum value except when it reaches the $ T$-type states. When restricted to states with high non-stabilizerness, the optimal strategy maximally violating the corresponding inequality cannot be found.  For $N \geq 3$, we see that after some violation value $w_c$ the inequality reaches its maximum while for $N =4$, the witness has a similar behaviour as $s_3(w)$ (see \eqref{eq:max_2stab}), where the maximal violation is found at intermediate regions of the violation and it decreases until $w= \mathcal{W}^\mathrm{STAB_2}_\mathrm{max}$.

The above results can be explained geometrically, since the corresponding quantum-optimal states are known, and their geometry is extensively discussed in \cite{ambainis2009quantumrandomaccesscodes}. As the proof of Theorem \ref{thm:TN_stab_values} illustrates, the optimizing solution for $N=2$ is represented by a square in the bloch sphere, which can be unitarily rotated by the symmetry indicated in Proposition \ref{prop:unitary_symmetry} into $\vec r_{00} = (1,0,0)$, $\vec r_{01} = (0,1,0)$, $\vec r_{11} = (-1,0,0)$ and $\vec r_{10} = (0,-1,0)$, which are points in the stabilizer polytope. Therefore, it does not present a gap when restricting to stabilizer states. However, note that the optimal strategy, corresponding exactly to the Hamming cube of $\{0,1\}^2$ inscribed into the stabilizer octahedron, cannot be reproduced by $T$  non-stabilizer states since no pair of these states are antipodal, therefore diminishing the violation of $T_2$ when restricting to high-non-stabilizer states. For $N=3$ we have a different behavior: the maximal quantum value is obtained for the $\{0,1\}^3$ Hamming cube, which cannot be included in the stabilizer octahedron. Therefore, there is a gap between the stabilizer and quantum values. However, the eight $T$-type maximal non-stabilizer states lie indeed at the coordinates of the $\{0,1\}^3$ Hamming cube. This allows for the optimal strategy and therefore for every value of non-stabilizerness the witness can obtain the maximal quantum value $T^Q_3$.  As mentioned before,  in Fig. \ref{fig: T_magic} we observe that this behavior is robust, since the maximum quantum value is obtained after some $w_c$. For $N \geq 4$, the known quantum solutions break this cubic symmetry (since the higher Hamming cubes do not even make sense in three-dimensional space) and the states with maximal facet violation do not necessarily correspond to the states with maximal values of $T_N$.

\begin{figure}
    \centering
    \includegraphics[width=1\linewidth]{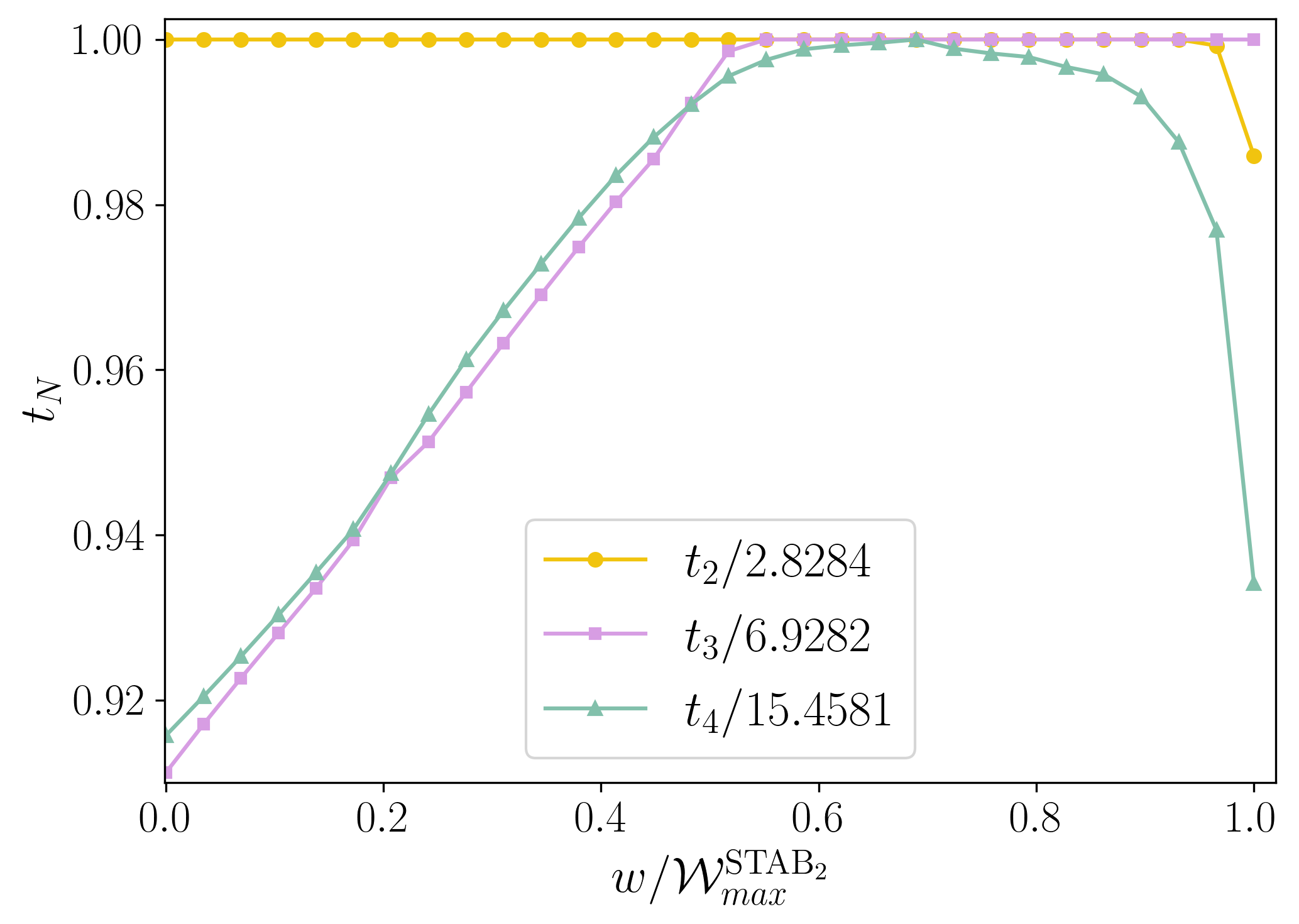}
   \caption{\textbf{Value of the witness of $T_N$ as a function of the non-stabilizerness of all $2^N$ states.}  We show the normalized (by its maximum) value of the witnesses $T_N$ . We assumed that all $2^N$ states have the same non-stabilizerness $w$.  For $N=2$ (the curve with golden circles), we show that for any positive value $w$, there exists a combination of states, all of them with non-stabilizerness $w$ that saturates the quantum bound. However, when restricting to states very close to the maximal non-stabilizerness, we cannot obtain the optimal strategy. The maximum for this point is $T_2 = 2.7886$. We also plot the witness $T_3$ (purple squares). In this case we see that after reaching the maximum around $w/\mathcal W^{\mathrm{STAB_2}} \approx 0.539$, the quantum maximum is still saturated; the reason is explained in the main text. Finally, we show the witness $T_4$ (green triangles). Due to the dimension of the optimization space, escaping local maxima is difficult and at some points we cannot completely assure that it is the global maximum, although we conjecture it is very close. }
    \label{fig: T_magic}
\end{figure}

As done in Sec. \ref{subsec:two_stabilizer}, one can also look at the value of $T_N$ under intermediate scenarios, where part of the prepared states is stabilizer and part is allowed to be arbitrary,  On Table \ref{tab: T3_Stab_values}, we show the results for $N=3$ and $N=4$.

\begin{table}[h]
    \centering
    \begin{tabular}{ccc} 
        \toprule
        $s$ stabilizers & $T_3$ &$T_4$ \\
        \midrule
        $0$ &  $T_3^Q = 6.9282$ & $T_4^Q = 15.4581$ \\
        $1$ & $6.9282 $ &$15.4581$\\
        $2$ & $6.9068$  &$15.4581$\\
        $3$ & $6.8539$  &$15.4581$\\
        $4$ & $6.6071$  &$15.4581$\\
        $5$ & $6.6038$  &$15.3664$\\
        $6$ & $6.5983$  &$15.1676$\\
        $7$ & $6.4931$  &$14.9220$\\
        $8$ & $T^{STAB}_3 =6.3132$  &$14.6377$\\
        \bottomrule
    \end{tabular}
    \caption{\textbf{$T_N$ maximum reachable value dependent on the number of stabilizer preparations.} The table shows the maximum value attainable by $T_3 \text{ and } T_4$  when considering $s$ stabilizer preparations. They are found by brute forcing through all the $6^s$ options of stabilizer states and numerically optimizing the Bloch sphere of the remaining $2^N-s$ states. In the case of $T_3$, the value $T^{STAB}_3 = 2\sqrt{6}+ \sqrt{2} \approx 6.3132 $ is the exact stabilizer value obtained by considering all the $6^8$ combinations of stabilizer states. For  $N>3$ the stabilizer value becomes difficult to find using this method. }
    \label{tab: T3_Stab_values}
\end{table}

\subsection{PAM scenario with $d$ dimensional systems} \label{SUBSEC:PAM scenario with $d$ dimensional systems}

For higher-dimensional systems, one needs to consider other QRACs. A class of PAM scenarios labelled by $(d, |\mathbb X| =d^2, |\mathbb Y| = 2, |\mathbb B|=d)$ allow for the so-called $2^{(d)} \to 1 $ QRAC \cite{Farkas2019} with the corresponding average success probability
\begin{equation}
    \bar p_d = \frac{1}{2d^2} \sum_{x, x^\prime \in [d]}\mathrm{Tr}[\rho_{x}(M^1_x+M^2_{x^\prime})],
\end{equation}
where we  denote $[d] = \{1, \cdots, d\}$. In this scenario, on the preparation side, Alice gets two uniformly random inputs $x,x^\prime \in \{1, \cdots, d\}$ which allow her to label $d^2$  $d$-dimensional preparations $\rho_{x, x^\prime} \in \mathcal D(\mathbb C^d)$. Alice sends to Bob one of these states and tries to guess Alice's input. That is, given that Bob receives an input $y\in\{1,2\}$, if $y=1$ he must guess the value $x$. Otherwise, if $y=2$ he would have to guess $x^\prime$ instead. To guess  Alice's input $x$, Bob performs a measurement described by the POVM $\{M^1_x\}_{x=1}^d$, and when trying to guess $x^\prime $ he uses the POVM  $\{M^2_{x^\prime}\}_{x^\prime=1}^d$. As POVM's, they are  positive semi-definite operators satisfying $\sum_{x=1}^d M^1_x=\sum_{x^\prime=1}^d M^2_{x^\prime} = \mathbb{I}$. 

We will now use $\bar p_d$ as a witness. In \cite{tavakoli2015quantum}, it was found the classical bound to be $p^C_d = (1+1/d)/2$, and
in \cite{Farkas2019}, the corresponding quantum maximum is shown to be:
\begin{equation}
p_d^Q  =\frac{1}{2}(1+\frac{1}{\sqrt d}),
\end{equation}
only attained if Bob's measurements are all rank-$1$ projective measurements and mutually unbiased, while  the prepared states $\rho_{x,x^\prime}$ correspond to the highest eigenvalue of $M^1_x+ M^2_{x^\prime}$. Following the strategy of the previous sections, we first frame the computation of the witness value as the following optimization problem

\begin{lem}
    Let $\{A^{\vec q} \}_{\vec q }$ be the facet representation of $\mathrm{STAB}_d$. Then:
    \begin{align}
        \bar p_d^\mathrm{STAB} = &\max_{\substack{\{\rho_{x,x^\prime} \}\in M_{d \times d}(\mathbb C)\\ \{M^1, M^2\} \in  M_{d \times d}(\mathbb C)}} \bar p_d(\{\rho_{x,x^\prime}\},\{M^1, M^2\})\\
        &\rho_{x,x^\prime} \succeq 0 \quad ;\quad \mathrm{Tr}(\rho_{x,x^\prime})=1 \quad \forall x,x^\prime \in [d]\\
        &\mathrm{Tr}(\rho_{x,x^\prime} A^{\vec q}) \geq 0 \quad \forall \vec q\\
        &\sum_{b=1}^d M^y_b = 1 \quad\quad \forall y \in \{1,2\}\\
        & M^y_b \succeq 0 \quad \forall b \in [d] ,\quad \forall y \in \{1,2\}
    \end{align}
    where $M_{d \times d}(\mathbb C)$ denotes the set of $d \times d$ complex matrices.
\end{lem}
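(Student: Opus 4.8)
The plan is to recognize that the stated optimization is nothing more than the definition of $\bar p_d^{\mathrm{STAB}}$ with the abstract constraint ``$\rho_{x,x'}\in\mathrm{STAB}_d$'' replaced by the explicit facet inequalities of the $H$-representation in Eq.~(\ref{eq: H-rep}). First I would write out the defining expression for the stabilizer value, in direct analogy with Eq.~(\ref{eq:state-measurement_opt}):
\begin{equation}
    \bar p_d^{\mathrm{STAB}} = \max_{\substack{\{\rho_{x,x'}\} \subseteq \mathrm{STAB}_d \\ \{M^1,M^2\} \subseteq \mathcal M_d(\mathbb C^d)}} \bar p_d\big(\{\rho_{x,x'}\},\{M^1,M^2\}\big),
\end{equation}
where every preparation is constrained to be a stabilizer state and $\{M^y_b\}$ ranges over all valid $d$-outcome POVMs; the objective $\bar p_d$ is linear in each $\rho_{x,x'}$ and in each $M^y_b$ separately.

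The key step is to unpack the membership condition $\rho_{x,x'} \in \mathrm{STAB}_d$. By the $H$-representation of the stabilizer polytope, Eq.~(\ref{eq: H-rep}), a Hermitian matrix $\rho$ belongs to $\mathrm{STAB}_d$ if and only if it is a genuine density matrix, $\rho \succeq 0$ and $\mathrm{Tr}(\rho)=1$, and in addition satisfies the facet inequalities $\mathrm{Tr}(\rho A^{\vec q}) \geq 0$ for every $\vec q \in \mathbb F_d^{d+1}$. Applying these three requirements to each $\rho_{x,x'}$, together with the standard POVM conditions $M^y_b \succeq 0$ and $\sum_b M^y_b = \mathbb I$, reproduces precisely the feasible region written in the statement; hence the two optimization problems coincide.

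At the level of proving the equality there is no real obstacle: the result is a faithful rewriting of the definition, and its only nontrivial ingredient is the correctness of the $H$-representation~(\ref{eq: H-rep}), which we take from the cited references. The genuinely hard part is not contained in this lemma but in what it sets up, namely the evaluation of the program: because the objective couples states and measurements through bilinear terms $\mathrm{Tr}\big(\rho_{x,x'}(M^1_x + M^2_{x'})\big)$, the feasible set is convex in each variable block but not jointly, so the problem is a bi-SDP. In practice I would therefore evaluate $\bar p_d^{\mathrm{STAB}}$ by alternating (see-saw) semidefinite optimization --- fixing the measurements and optimizing the stabilizer-constrained states, then fixing the states and optimizing the POVMs --- rather than by a single convex solve.
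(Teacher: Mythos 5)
Your proof is correct and follows essentially the same route as the paper: both simply unpack the definition of $\bar p_d^{\mathrm{STAB}}$ by combining the semidefinite characterization of density matrices and POVMs with the $H$-representation of the stabilizer polytope in Eq.~(\ref{eq: H-rep}), so the lemma is a faithful rewriting of the definition. Your closing remarks on the bilinear (see-saw) structure of the resulting program match the paper's own discussion following the lemma, though they are not part of the proof itself.
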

\begin{proof}
    These results follow from the semidefinite characterization of the set of states and measurements:
    \begin{align}
        \mathcal D(\mathbb C^d)  &= \{\rho \in M_{d \times d}(\mathbb C)\;|\; \mathrm{Tr}\rho= 1\;,\; \rho \succeq 0\},\\
        \mathcal M_d(\mathbb C^d)  &= \{\{M_b\}_{b=1}^d \in M_{d \times d}(\mathbb C)\;|\; \sum_b M_b=1\;,\; M_b \succeq 0\},
    \end{align}
    and the facet representation of the stabilizer polytope, already written on Eq. (\ref{eq: H-rep}).
\end{proof}

If either the measurements or the states are held fixed, the problem becomes a semi-definite program (SDP), amenable to standard optimization techniques \cite{skrzypczyk2023semidefinite}. However, if both of them are to be optimized, the corresponding objective function is quadratic on matrices with conic constraints. A common solution that allows writing it as an SDP is by performing a seesaw algorithm at the risk of not finding the global maxima~\cite{Mironowicz_2024}.

For $d=2$, note that this corresponds to the $(d=2, |\mathbb X| =4, |\mathbb Y| =2, |\mathbb B|=2)$ quantum scenario, which is exactly the one we just discussed on Sec. \ref{sec:qubitqracs}. The first non-trivial case appears for $d=3$, where we bound qutrit correlations. By doing this seesaw procedure, we find that the stabilizer value is 
\begin{equation}
p_3^\mathrm{STAB}\approx0.712< p_3^{Q} = (1+1/\sqrt{3})/2 \approx0.789,
\end{equation}
with $p^\mathrm{STAB}_3 > p^C_3 =(1+1/3)/2 =2/3 \approx 0.667 $. To obtain this value, we initialized the seesaw algorithm with different random states and measurements, and after $10^5$ realizations, we could not find a higher value.

To complete our analysis, we introduce in the next subsection a formal criterion that shows necessary conditions for the existence of a stabilizer bound for a given witness. The key idea is that, when restricting to stabilizer states, the set of pure stabilizer states is discrete and of measure zero within the set of all quantum states. As a result, the maximum value of any witness is generally unattainable by stabilizer states, unless the optimal quantum strategy happens to align (up to rigid rotations) with one of the polytope's vertices.

\section{Separating Witnesses and State overlaps}\label{sec:Separating Witnesses}

In the previous sections, we introduced separating witnesses for various PAM scenarios, illustrating the hierarchy established in Eq.~(\ref{eq:strict_hierarchy}). Here, we present a necessary condition to determine whether a given witness with coefficients $W_{x,y}^b$ may certify non-stabilizerness. Our analysis draws upon recent developments that characterize resource sets through state overlaps \cite{PhysRevA.109.032220, PhysRevA.101.062110}.

Due to convexity, the optimization given in Eq.~(\ref{eq:state-measurement_opt}) can be performed over the extremal sets (of measurements and states, although in the following we will look only at the set of states) defined in the $S$ scenario. Denoting  $\mathrm{ext}S$ the extremal set of states in scenario $S$, Eq.~(\ref{eq:state-measurement_opt}) is equivalent to 
\begin{equation}
    W^S =\max_{\substack{|\psi\rangle \in \mathrm{ext}S\\ \{M^y\} \in \mathcal M_\mathbb B(\mathbb C^d)}} \sum_{b,x,y}W^{b}_{x,y}\bra{\psi}M^y_b\ket{\psi}\;,
\end{equation}
In particular, if $S$ is the quantum scenario, the extremal set of states are the pure states $\mathrm{ext}S = \mathbb C^d$.  For $S \in\{C, \mathrm{STAB}\}$, the corresponding set of states is not only convex, but is also given by a polytope. This means that $\mathrm{ext} S$ is a finite set of extremal elements. In the classical scenario the number of these elements is given by $|\mathrm{ext} C|=d$, where $|\mathrm{ext} C|$ denotes the cardinality of a $d$-dimensional orthonormal basis, while in the stabilizer scenario the number of extremal elements is given by $|\mathrm{ext}\mathrm{STAB}| = d(d+1)$,  $|\mathrm{ext}\mathrm{STAB}|$ denotes the number of vertices considered in Eq.~(\ref{eq: StabPolytope_Vrep}). We claim that the natural object to look at is the corresponding Gram Matrix of the extremal states
\begin{equation}
    G^S_{ij} \equiv \langle \psi_i|\psi_j\rangle  \;;\; i,j \in \{1, \cdots,|\mathrm{ext}S|\}\;,
\end{equation}
which is invariant under unitaries $\ket{\psi_i} \mapsto U\ket{\psi_i}$. Indeed, the following Proposition shows that the Gram matrix completely determines the structure of unitary invariants. 
\begin{prop}\label{prop:unitaries_overlaps}
    Consider two sets of states $\{\ket{\psi_x}\}_x, \{\ket{\phi_x}\}_x \subseteq \mathbb C^d$ with same cardinality. Then, there exists a unitary $ U \in \mathcal U(d)$ such that $\ket{\psi_x} = U \ket{\phi_x}$ if and only if :
    \begin{equation}
    \langle \psi_x|\psi_{x^\prime}\rangle =\langle \phi_x |\phi_{x^\prime}\rangle \;, \forall x,x^\prime\;.
    \end{equation}
\end{prop}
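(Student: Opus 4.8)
The plan is to treat the two implications separately. The forward direction is immediate: if $\ket{\psi_x} = U\ket{\phi_x}$ for some $U \in \mathcal{U}(d)$, then unitarity gives $\langle \psi_x | \psi_{x'}\rangle = \langle \phi_x | U^\dagger U | \phi_{x'}\rangle = \langle \phi_x | \phi_{x'}\rangle$ for all $x,x'$, since $U^\dagger U = \mathbb{I}$. All the real work is therefore in the converse.

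For the converse, the central observation is that equality of the two Gram matrices encodes all linear-dependence relations among the vectors. For any complex coefficients $\{c_x\}$ one has
\[
\left\| \sum_x c_x \ket{\psi_x}\right\|^2 = \sum_{x,x'} \bar c_x c_{x'} \langle \psi_x | \psi_{x'}\rangle = \sum_{x,x'} \bar c_x c_{x'} \langle \phi_x | \phi_{x'}\rangle = \left\| \sum_x c_x \ket{\phi_x}\right\|^2.
\]
I would use this to define a linear map $V$ on $\mathcal{V}_\phi \equiv \mathrm{span}\{\ket{\phi_x}\}$ by $V\ket{\phi_x} = \ket{\psi_x}$. The displayed identity shows that $\sum_x c_x \ket{\phi_x} = 0$ if and only if $\sum_x c_x \ket{\psi_x} = 0$, so $V$ respects every linear relation and is well defined as a map onto $\mathcal{V}_\psi \equiv \mathrm{span}\{\ket{\psi_x}\}$; the same identity shows $V$ preserves norms, hence by polarization all inner products, so $V$ is an isometric isomorphism between $\mathcal{V}_\phi$ and $\mathcal{V}_\psi$.

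The remaining step is to promote $V$ to a global unitary on $\mathbb{C}^d$. Since the Gram matrices coincide they have equal rank, so $\dim \mathcal{V}_\phi = \dim \mathcal{V}_\psi$ and thus the orthogonal complements $\mathcal{V}_\phi^\perp$ and $\mathcal{V}_\psi^\perp$ have equal dimension as well. I would then pick any unitary identification of these complements and set $U$ to act as $V$ on $\mathcal{V}_\phi$ and as that identification on $\mathcal{V}_\phi^\perp$, giving $U \in \mathcal{U}(d)$ with $U\ket{\phi_x} = \ket{\psi_x}$. The main obstacle to handle carefully is precisely the degenerate case in which the states do not span $\mathbb{C}^d$: one must confirm that the equal-rank property guarantees the complements match in dimension and that the extension is consistent with $V$ on $\mathcal{V}_\phi$, which is why the rank argument is the load-bearing part of the construction.
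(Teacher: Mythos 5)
Your proof is correct and follows essentially the same route as the paper's: define the map $\ket{\phi_x} \mapsto \ket{\psi_x}$ on the span, verify from the Gram-matrix equality that it preserves inner products, and extend it to a unitary on all of $\mathbb{C}^d$ via a unitary identification of the equal-dimensional orthogonal complements. If anything, you are slightly more careful than the paper on one point: your norm identity $\left\| \sum_x c_x \ket{\psi_x}\right\| = \left\| \sum_x c_x \ket{\phi_x}\right\|$ explicitly establishes that the map is well defined when the states are linearly dependent, a consistency check the paper's verification leaves implicit.
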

This proposition is a corollary of Wigner's theorem \cite{SIMON20086847}. Recalling the invariance under unitaries of the witness, $W(\{U |\psi_x\rangle \langle |\psi_x|U^\dagger\}_{x \in \mathbb X}) =W(\{|\psi_x\rangle \langle \psi_x|\}_{x \in \mathbb X})$, we give a necessary condition for a set of states to violate an inequality: the corresponding Gram matrix must be distinct from those generated in the scenario $S$. We have the following theorem, whose proof is written in Appendix \ref{app:proof_thm1}.
\begin{thm}
    Consider a prepare-and-measure witness $W$ with classical and stabilizer value $W^C$ and $W^\mathrm{STAB}$ respectively. Consider some set of states $\{\ket{\psi_x}\}_{x \in \mathbb{ X}} \subseteq \mathbb C^d$, with the corresponding $|\mathbb X|\times |\mathbb X|$ Gram matrix $G_{x, x^\prime}\equiv \langle \psi_x|\psi_{x^\prime}\rangle$. Given $ S \in \{C,\mathrm{STAB}\}$,
    \begin{enumerate}
        \item Given any subset $\{|\psi^S_x\rangle\}_x \subseteq \mathrm{ext}S$, the corresponding Gram matrices $G^S_{x,x^\prime} = \langle \psi^S_x|\psi^S_{x^\prime}\rangle$ is severely constrained in the classical and stabilizer scenarios: $G^C_{x,x^\prime} \in \{0,1\}$, and $G^\mathrm{STAB}_{x,x^\prime} \in \{0,1/\sqrt d, 1\}$.
        \item if $W(\{|\psi_x\rangle \langle \psi_x|\}_x) > W^S$, then, there is no such subset such that $G_{x,x^\prime} = G^S_{x,x^\prime}$.
    \end{enumerate}
    \label{thm:witness_overlap_thm}
\end{thm}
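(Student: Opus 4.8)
The plan is to treat the two parts separately, since they rest on quite different ingredients. \textbf{Part 1} is a purely structural fact about the extremal states of the two polytopes. For $S=C$ the extremal set is a single orthonormal basis $\{|i\rangle\}_{i=1}^d$, so any two extremal states are identical or orthogonal and $G^C_{x,x'}=\delta_{xx'}\in\{0,1\}$ is immediate. For $S=\mathrm{STAB}$ I would invoke the fact that, in prime dimension $d$, the $d(d+1)$ pure stabilizer states organize into $d+1$ mutually unbiased bases, each containing $d$ mutually orthogonal states. This is exactly the eigenbasis structure of the $d+1$ operators $\{D_{0,1},D_{1,0},D_{1,1},\ldots,D_{1,d-1}\}$ that already appear in the $H$-representation Eq.~(\ref{eq: H-rep}): each generates a maximal commutative subgroup of $\mathcal P_d$, and their joint eigenbases form a complete set of MUBs. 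Two distinct stabilizer states therefore either lie in the same basis, giving overlap $0$, or in different bases, giving $|\langle \psi^S_x|\psi^S_{x'}\rangle|=1/\sqrt d$ by mutual unbiasedness, while identical states give $1$. Hence the admissible magnitudes are exactly $\{0,1/\sqrt d,1\}$.

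\textbf{Part 2} I would prove by contraposition, chaining two earlier results. Assume $W(\{|\psi_x\rangle\langle\psi_x|\}_x)>W^S$ and suppose, for contradiction, that there is a subset $\{|\psi^S_x\rangle\}_x\subseteq\mathrm{ext}S$ whose Gram matrix coincides, $G_{x,x'}=G^S_{x,x'}$ for all $x,x'$. By Proposition \ref{prop:unitaries_overlaps}, equality of the full Gram matrices is equivalent to the existence of a unitary $U\in\mathcal U(d)$ with $|\psi_x\rangle=U|\psi^S_x\rangle$. The witness value on a family of states is invariant under a common unitary, since the optimization over measurements in Eq.~(\ref{eq:state_witness}) can absorb $U$, exactly as in Lemma \ref{lemma: violation_unitary}; thus $W(\{|\psi_x\rangle\langle\psi_x|\}_x)=W(\{|\psi^S_x\rangle\langle\psi^S_x|\}_x)$. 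But each $|\psi^S_x\rangle\in\mathrm{ext}S\subseteq S$, so by the definition of the scenario value in Eq.~(\ref{eq:state-measurement_opt}) we have $W(\{|\psi^S_x\rangle\langle\psi^S_x|\}_x)\le W^S$. Combining gives $W(\{|\psi_x\rangle\langle\psi_x|\}_x)\le W^S$, contradicting the hypothesis.

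The main obstacle is the phase bookkeeping in Part 1: the statement $G^\mathrm{STAB}_{x,x'}\in\{0,1/\sqrt d,1\}$ is literally a statement about \emph{magnitudes}, because stabilizer overlaps carry Clifford-dependent phases. I would therefore either phrase Part 1 explicitly for $|G^S_{x,x'}|$ or fix a canonical phase convention on each stabilizer state. Crucially, this does not weaken Part 2, where the argument only needs equality of the full phase-carrying Gram matrices as the input to Proposition \ref{prop:unitaries_overlaps}; Part 1 then functions as a fast preliminary screening, since any pair with $|G_{x,x'}|\notin\{0,1/\sqrt d,1\}$ already excludes a stabilizer realization. A minor point to check is that Proposition \ref{prop:unitaries_overlaps} applies verbatim to a \emph{subset} of the extremal states indexed by $\mathbb X$, which is immediate as it holds for any two equal-cardinality families of pure states.
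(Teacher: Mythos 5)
Your proof is correct, and Part 2 is essentially identical to the paper's argument: assume Gram-matrix equality, invoke Proposition~\ref{prop:unitaries_overlaps} to get a common unitary, use unitary invariance of the witness (Proposition~\ref{prop:unitary_symmetry}) and the definition of $W^S$ in Eq.~(\ref{eq:state-measurement_opt}) to bound the value, and conclude by contraposition. Where you genuinely diverge is Part 1 for $S=\mathrm{STAB}$: you invoke the known structural fact that in prime dimension the $d(d+1)$ pure stabilizer states organize into $d+1$ mutually unbiased bases (the joint eigenbases of $\{D_{0,1},D_{1,0},\ldots,D_{1,d-1}\}$), from which the overlap values $\{0,1/\sqrt d,1\}$ follow immediately. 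The paper instead proves this from scratch: it writes each stabilizer projector as $\frac 1 d\sum_k D^k_{a,b}$, computes $\mathrm{Tr}(\rho_x\rho_{x^\prime})$ via the group law in Eq.~(\ref{eq: WHgroup_operation}) and Hilbert--Schmidt orthogonality of displacement operators, and classifies the solutions of the resulting linear system over $\mathbb F_d$ by the symplectic product $\Delta = a_{x^\prime}b_x - a_x b_{x^\prime}$ ($\Delta\neq 0$ giving overlap $1/d$, $\Delta=0$ with a nontrivial solution giving $1$, no solution giving $0$). In effect the paper re-derives mutual unbiasedness; your route is shorter and more conceptual but rests on an external theorem (Wootters--Fields type MUB constructions for prime $d$), while the paper's is self-contained and makes the symplectic structure explicit. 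Your remark about phases is also well taken and, if anything, more careful than the paper: since the paper computes $\mathrm{Tr}(\rho_x\rho_{x^\prime}) = |\langle \mathcal S_x|\mathcal S_{x^\prime}\rangle|^2$, it too only establishes the claim for overlap \emph{magnitudes}, and you correctly observe that Part 2 is unaffected because Proposition~\ref{prop:unitaries_overlaps} takes the full phase-carrying Gram matrix as input, with Part 1 serving only as a screening criterion on $|G_{x,x^\prime}|$.
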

The theorem shows that state overlaps can be used to efficiently determine whether a set of states respects a given inequality. Consider a witness $W$ with a scenario (classical or stabilizer) bound denoted by $W^{\mathrm{S}}$. If one can construct a set of states ${\ket{\psi_x}}_{x \in \mathbb{X}} \subseteq \mathbb{C}^d$ such that their pairwise overlaps match those defined by $\mathrm{ext}S$, then these states cannot exceed the scenario bound, i.e., $W \leq W^{\mathrm{S}}$.

However, in general, the absence of such matching does not necessarily imply a violation of the prepare-and-measure witness value $W^{S}$.  This illustrates that, in general, employing a witness offers a stronger form of certification than merely comparing state overlaps. This idea is also reflected in Ref.~\cite{wagner2024certifyingnonstabilizernessquantumprocessors}, where the authors use overlap inequalities rather than relying solely on the overlaps themselves. A key distinction between our approach and theirs is that prepare-and-measure witnesses do not require knowledge of all pairwise overlaps within the considered set of states; it suffices to observe the overall statistics. The drawback, however, is that our method assumes the states are $d$-dimensional, which can be challenging to guarantee in practice.

Finally, we note that  prepare-and-measure witnesses can be seen trivially as overlap inequalities on the overlaps $O_{xy}^b = Tr(\rho_x M^y_b)$
\begin{equation}
    W =\sum_{b,x,y} W^b_{x,y}p(b|x,y)=\sum_{b,x,y} W^b_{x,y}O_{xy}^b.
\end{equation}
Ref.~\cite{PhysRevA.109.032220} demonstrates that inequalities for various quantum resources can be derived by considering polytopes defined by overlaps, such as those between states, or between measurements and states, as in the example above. Although the authors do not explicitly construct prepare-and-measure witnesses within this framework, it appears plausible that such witnesses could also be obtained using their overlap-based formalism.
As a less trivial observation, we note that some of the prepare-and-measure witnesses studied in this work can be written in terms of state overlaps $r_{ij} = Tr(\rho_i\rho_j)$ between state $\rho_i$ and $\rho_j$. For example, considering the form~\eqref{eq.S3withr} of Ineq.~$S_3$, by expressing the norms in terms of the dot product in the Euclidean space and by noting that for qubits the dot product of their Bloch vectors is related to their overlap as
$\vec{r}_{x}\cdot \vec{r}_{y} = 2r_{xy}-1 $, we can find that $S_3$ can be written as 
\begin{align}
    &\sqrt{-1 + 2(r_{11}+r_{22}+r_{33}) + 4(r_{12}-r_{13}-r_{23}) } \nonumber\\&+ \sqrt{2(r_{11} + r_{22}) - 4r_{12}}\leq3,
\end{align}
which for pure states reduces to 
\begin{equation}
    \sqrt{5+ 4(r_{12}-r_{13}-r_{23}) } + 2\sqrt{1 - r_{12}}\leq3.
\end{equation}

Similarly, one can show that the $T_N$ witnesses take the form 
\begin{equation}
    T_N = \frac{1}{\sqrt{2}} \sum_{i=1}^{N}\sqrt{\sum_{xy}(-1)^{x_i+y_i}r_{xy}}.
\end{equation}
We note that when expressed through states overlaps, the resulting inequalities cannot represent a polytope facet since they are not linear. It may be worth to contrast with Ref.~\cite{wagner2024certifyingnonstabilizernessquantumprocessors} where they use linear overlap inequalities for non-stabilizerness detection. As another note,  the inequalities they used were designed for detecting coherence, which is compatible with our results, reinforcing the idea that coherence and non-stabilizerness SDI detection may be treated within the same formalism. Regarding this observation and as a final note, we realized that in Ref.~\cite{PhysRevA.111.L040402} they constructed a witness to detect coherence, which ends up being able to detect non-stabilizerness as well. We found that the inequality they constructed is nothing else than $S_3$, which explains their findings. 
\\

\section{FINAL REMARKS}
\label{sec:FINAL REMARKS}

This work introduces a semi-device-independent framework for certifying non-stabilizer states in prepare-and-measure scenarios. By relying solely on a minimal assumption about the dimension of the quantum system, the proposed prepare-and-measure witnesses offer a robust and experimentally feasible approach to detecting non-stabilizerness.

In the simplest setting—three state preparations and two measurement settings—we identify concrete thresholds whose violation certifies the presence of non-stabilizer states. Notably, the maximal violation occurs uniquely for states in the Clifford orbit of the “H-state”, providing a kind self-testing method for the experimental verification of such states. Interestingly, we also observe that in certain cases, states with intermediate levels of non-stabilizerness yield the strongest witness violations, suggesting a nuanced relationship between this resource and operational detectability.

Extensions to scenarios with more preparations reveal further structure. In particular, for eight preparations, the observed witness values allow one to infer a lower bound on the number of non-stabilizer preparations involved. This quantitative feature supports more refined experimental protocols and informs the design of tasks such as quantum random access codes (QRACs), highlighting the operational relevance of non-stabilizer states.

We also propose a criterion based on state overlaps (Gram matrices), offering a necessary condition for non-stabilizerness certification. Specifically, the appearance of matrix elements outside the set $\{0, \frac{1}{\sqrt d}, 1\}$ is shown to be essential in $d$-dimensional systems. This connection points toward a deeper interplay between resource-theoretic concepts and semi-device-independent methods.

Altogether, our results provide a set of analytically grounded and practically applicable tools for the study and certification of non-stabilizer states. They contribute to the broader effort to understand and utilize the resources necessary for achieving quantum computational advantages within a minimal-assumption framework.

\section*{Acknowledgements}
We thank an anonymous referee for their insightful suggestion, which led to the results presented in Fig. 4 of our manuscript. This work was supported by the Simons Foundation (Grant Number 1023171, RC), the Brazilian National Council for Scientific and Technological Development (CNPq, Grants No.307295/2020-6, No.403181/2024-0 and 301687/2025-0), the Financiadora de Estudos e Projetos (grant 1699/24 IIF-FINEP) and the Coordenação de Aperfeiçoamento de Pessoal de Nível Superior – Brasil (CAPES) – Finance Code 001. This work has also been partially funded by the project "Comparative Analysis of P\&M Protocols for Quantum Cryptography" supported by QuIIN - Quantum Industrial Innovation, EMBRAPII CIMATEC Competence Center in Quantum Technologies, with financial resources from the PPI IoT/Manufatura 4.0 of the MCTI grant number 053/2023, signed with EMBRAPII. DP acknowledges funding from MUR PRIN (Project 2022SW3RPY).\\
\textbf{ Code availability.} The code used to generate the results in
this paper is available on GitLab: \href{https://gitlab.com/data_availability/sdi-nonstabilizerness-certification}{https://gitlab.com/data\_availability/sdi-nonstabilizerness-certification}

\bibliographystyle{apsrev4-2}
\bibliography{references}
\appendix
\section{  Stabilizer strategies  maximizing the $T_N$  witnesses for $N\in\{3,4,5\}$}\label{app: stab_optimal_strategies}

This appendix provides explicit stabilizer state configurations—represented as Bloch vectors—that maximize the prepare-and-measure inequality \(T_N\) for scenarios with \(N = 3\), \(4\), and \(5\) preparations. These configurations are derived from numerical optimizations restricted to stabilizer states.\\

\textbf{Case N=3}\\

The optimal stabilizer strategy for \(T_3\) involves the following eight qubit states (Bloch vectors):  
  
\begin{align}
&\vec{r}_1 = [ 0,  0,  1] \quad \vec{r}_2 = [ 0,  1, 0], \quad \vec{r}_3 = [-1, 0,  0], \\
&\vec{r}_4 = [0,  1, 0], \quad \vec{r}_5 = [ 0, -1,  0], \quad \vec{r}_6 = [ 1,  0,  0], \\
&\vec{r}_7 =[0, -1, 0], \quad \vec{r}_8 = [ 0,  0, -1].
\end{align}
these vectors correspond to antipodal eigenstates of the Pauli operators \(\sigma_X\), \(\sigma_Y\), and \(\sigma_Z\).\\

\textbf{Case N=4} \\

For \(T_4\), the stabilizer-optimal strategy uses 16 states with Bloch vectors:

\begin{equation*}
\begin{aligned}
&\vec{r}_1 = [ 0,  1,  0], \quad \vec{r}_2 = [0,  0, -1], \quad \vec{r}_3 = [ 0,  0,  1],\\
&\vec{r}_4 = [ 0, -1, 0], \quad \vec{r}_5 = [ 0,  1, 0], \quad \vec{r}_6 = [ 0,  1, 0],\\
&\vec{r}_7 = [ 0,  0,  1], \quad \vec{r}_8 =[0,  0,  1], \quad \vec{r}_9 = [0,  0, -1], \\
&\vec{r}_{10} = [ 0,  0, -1], \quad \vec{r}_{11} = [ 0, -1,  0],\quad \vec{r}_{12} = [0, -1, 0], \\
&\vec{r}_{13} = [0,  1, 0], \quad \vec{r}_{14} = [0, 0, -1], \quad \vec{r}_{15} = [0,  0,  1], \\
&\vec{r}_{16} = [ 0, -1,  0].
\end{aligned}
\end{equation*}
\\

\textbf{Case N=5}\\

For \(T_5\), the stabilizer-optimal strategy employs 32 states with Bloch vectors:

\begin{equation*}
\begin{aligned}
&\vec{r}_1 = [0, 0, 1], \quad \vec{r}_2 = [0, 0, 1], \quad \vec{r}_3 = [0, 0,  1], \\
&\vec{r}_4 = [0,  1,  0] \quad \vec{r}_5 = [0, 0,  1], \quad \vec{r}_6 = [0,  0,  1], \\
&\vec{r}_7 = [0, 0,  1], \quad \vec{r}_8 = [0,  1, 0], \quad \vec{r}_9 = [0,  0,  1], \\
&\vec{r}_{10} = [0,  1,  0], \quad \vec{r}_{11} = [0,  1,  0], \quad \vec{r}_{12} = [0,  1, 0], \\
&\vec{r}_{13} = [0, -1,  0], \quad \vec{r}_{14} = [0,  0, -1], \quad \vec{r}_{15} = [0,  0, -1], \\
&\vec{r}_{16} = [0,  0, -1], \quad \vec{r}_{17} = [0,  0,  1], \quad \vec{r}_{18} = [0, 0,  1], \\
&\vec{r}_{19} = [0,  0,  1], \quad \vec{r}_{20} = [0,  1, 0], \quad \vec{r}_{21} = [0, -1,  0], \\
&\vec{r}_{22} = [0, -1, 0], \quad \vec{r}_{23} = [0, -1, 0], \quad \vec{r}_{24} = [0,  0, -1], \\
&\vec{r}_{25} = [0, -1,  0], \quad \vec{r}_{26} = [0,  0, -1], \quad \vec{r}_{27} = [0, 0, -1], \\
&\vec{r}_{28} = [0,  0, -1], \quad \vec{r}_{29} = [0, -1, 0], \quad \vec{r}_{30} = [0,  0, -1], \\
&\vec{r}_{31} = [0, 0, -1], \quad \vec{r}_{32} = [0, 0, -1].
\end{aligned}
\end{equation*}

 In all these states, we note the XOR property proven in the main text.

\section{Proof of Theorem \ref{thm:witness_overlap_thm}}
 \label{app:proof_thm1}

    (1) Our goal is to derive an explicit form for the Gram matrix elements $G^{S}_{x,x^\prime} = \langle \psi^S_x|\psi^S_{x^\prime}\rangle$ for $S\in~\{C, \mathrm{STAB}\}$.
    \begin{itemize}
        \item Case $S = C$. For this scenario the derivation is straight forward since $\mathrm{ext}C$ is an orthonormal basis of $\mathbb C^d$, the corresponding Gram matrix is $G^C_{x, x\prime} = 1$, if $|\psi^C_x\rangle = \ket{\psi^C_{x^\prime}}$, and zero otherwise.
        
        \item Case $S=\mathrm{STAB}$.  Here, the derivation involves subtler considerations. For $d=2$, the Gram matrix was already obtained in Ref.~\cite{10.5555/2638682.2638691}, but here we will generalize for dimension $d$. We consider two states $|\mathcal S_x\rangle, |\mathcal S_{x^\prime}\rangle \in\mathbb C^d$, stabilized by $\mathcal S_x = \langle D_{a_x,b_x}\rangle \subseteq \mathcal P_d$ and $\mathcal S_{x^\prime}  = \langle D_{a_{x^\prime}, b_{x^\prime}}\rangle \subseteq \mathcal P_d$, correspondingly. The non-trivial vectors $(a_x, b_x) \in \mathbb F_d^2$ and $(a_{x^\prime}, b_{x^\prime}) \in \mathbb F_d^2$ fix the generators of the groups $\mathcal S_x$ and $\mathcal S_{x^\prime}$. The corresponding projectors are then written as
        \begin{align}
            \rho_x &\equiv |\mathcal S_x\rangle \langle \mathcal S_x| = \frac{1}{d}\sum_{k=0}^{d-1}D_{a_x, b_x}^k\;,\\
           \rho_{x^\prime}&\equiv  |\mathcal S_{x^\prime}\rangle \langle \mathcal S_{x^\prime}| = \frac{1}{d}\sum_{k=0}^{d-1}D_{a_{x^\prime}, b_{x^\prime}}^k\;.
        \end{align}
        so  $\mathrm{Tr}(\rho_x \rho_{x^\prime}) =|\langle \mathcal S_x| \mathcal S_{x^\prime}\rangle|^2 $ can be written as
        \begin{align}
            \mathrm{Tr}(\rho_x \rho_{x^\prime}) & =\frac{1}{d^2} \sum_{k,k^\prime=0}^{d-1} \mathrm{Tr}(D^k_{a_x,b_x}D^{k^\prime}_{a_{x^\prime}, b_{x^\prime}})\nonumber\\
            &= \frac{1}{d^2} \sum_{k,k^\prime=0}^{d-1}\mathrm{Tr}(D_{ka_x,kb_x}D_{k^\prime a_{ x^\prime}, k^\prime b_{x^\prime}})\nonumber\\
            &= \frac{1}{d^2} \sum_{k,k^\prime=0}^{d-1}\mathrm{Tr}(D^\dagger_{-ka_x,-kb_x}D_{k^\prime a_{ x^\prime}, k^\prime b_{x^\prime}})\nonumber\\
            &= \frac{1}{d} \sum_{k,k^\prime=0}^{d-1}\delta_{k^\prime a_{x^\prime},-k a_x} \delta_{k^\prime b_{x^\prime},-k b_x}\;.\label{eq: overlaps_Th1Proof}
        \end{align}
        Where in the second and third line we used the group multiplication given in Eq.~(\ref{eq: WHgroup_operation}) which allows to calculate the product $D^k_{a_x,b_x}  = D_{ka_x,kb_x}$  and the inverse $D_{ka_x,kb_x}\cdot D_{-ka_x,-kb_x} = I$. From the third to the fourth line we used the fact that displacement operators are orthogonal to the Hilbert-Schmidt inner product, $\langle D_{a,b}, D_{c,d}\rangle_\mathrm{HS} \equiv \mathrm{Tr}(D^\dagger_{a,b} D_{c,d}) = d \delta_{a,c}\delta_{b,d}$.  The conditions implied by the deltas in this fourth line can be understood as a system of equations given by
        \begin{equation}
        \begin{pmatrix}
            a_{x^\prime} & a_x\\
            b_{x^\prime} & b_x
        \end{pmatrix}
        \begin{pmatrix}
            k^\prime\\
            k
        \end{pmatrix}
        =
        \begin{pmatrix}
            0\\
            0
        \end{pmatrix},
        \end{equation}
       parametrized by $(a_x, b_x, a_{x^\prime}, b_{x^\prime}) \in \mathbb F^4_d$. Of course, the lack of a solution for the system implies  $\mathrm{Tr}(\rho_x \rho_{x^\prime}) = 0$. We assume there exists at least one solution. Consider the determinant, which is exactly the symplectic inner product of $(a_x, b_x)$ and $(a_{x^\prime}, b_{x^\prime})$:
        \begin{equation}
            \Delta = a_{x^\prime}b_x-a_x b_{x^\prime}\;.
        \end{equation}
        Consider first the case where $\Delta \neq 0 \mod d$. Then, the only solution is the trivial one and $k=k^\prime = 0$. This corresponds exactly to the case where $\mathrm{Tr}(\rho_x \rho_{x^\prime})=1/d$. Otherwise, assume that indeed $a_{x^\prime} b_x = a_x b_{x^\prime}$. Since we assume that we have non-trivial vectors $(a_x,b_x) \neq (0,0)$ and $(a_{x^\prime}, b_{x^\prime}) \neq (0,0)$,  the coefficient matrix has rank 1. This means that the solution can be described by only one equation of the form  $c_1k + c_2 k^\prime= 0 \mod d$ with $c_1$ and $c_2$ non-vanishing coefficients.  Therefore, the kronecker deltas in Eq.~(\ref{eq: overlaps_Th1Proof}) are all equal to $1$ implying $\mathrm{Tr}(\rho_x \rho_{x^\prime})= 1$. In conclusion, we then have shown that the possible values of the $G$ matrix in the stabilizer scenario are
        \begin{equation}
            G_{x,x^\prime}^\mathrm{STAB} \in \{0,1/\sqrt d,1\}.
        \end{equation}
    \end{itemize}
    (2) Assume that there is a set $\{|\psi^S_x\rangle\}_{x \in \mathbb X} \subseteq \mathrm{ext} S$, such that 
    \begin{equation}
        G_{x, x^\prime} = \langle \psi_x^S|\psi_{x^\prime}^S\rangle\;, \label{eq: condition_Gram_matrix}
    \end{equation}
    then $W(\{\ket{\psi_x} \bra{\psi_x}\}_x) \leq W^S$. This can be seen by noting that Proposition \ref{prop:unitaries_overlaps} guarantees that the set $\{\ket{\psi_x}\}_{x \in \mathbb{ X}}$ is unitarily equivalent to $\{|\psi^S_x\rangle\}_{x \in \mathbb X}$. Then Proposition \ref{prop:unitary_symmetry} implies  that $W(\{\ket{\psi_x} \bra{\psi_x}\}_x) \leq ~W^\mathrm{S}$. Hence, the violation of the inequality means that there is no subset of states with the same Gram matrix.
    $\square$
\end{document}